\newtheorem{thm}{Theorem}
\newtheorem*{thm*}{Theorem}
\newtheorem{prop}[thm]{Proposition}
\newtheorem*{prop*}{Proposition}
\newtheorem{lemma}[thm]{Lemma}
\newtheorem*{lemma*}{Lemma}
\newtheorem{cor}[thm]{Corollary}
\newtheorem*{cor*}{Corollary}
\newtheorem*{ex*}{Example}
\newtheorem*{cj*}{Conjecture}
\newtheorem*{Def*}{Definition}
\theoremstyle{definition}
\newtheorem*{rem}{Remark}
\newcommand{\bq}{\begin{equation*}}
\newcommand{\be}{\begin{equation}}
\newcommand{\eq}{\end{equation*}}
\newcommand{\ee}{\end{equation}}
\newcommand{\bmu}{\begin{multline*}}
\newcommand{\emu}{\end{multline*}}
\newcommand{\ban}{\begin{align*}}
\newcommand{\bal}{\begin{align}}
\newcommand{\ean}{\end{align*}}
\newcommand{\eal}{\end{align}}
\newcommand{\Tr}{\text{Tr}\,}
\newcommand{\red}[1]{{#1}} 
\begin{document}

\title{Schur complement inequalities for covariance matrices and monogamy of quantum correlations}

\author{Ludovico Lami}
\affiliation{F\'{\i}sica Te\`{o}rica: Informaci\'{o} i Fen\`{o}mens Qu\`{a}ntics, Departament de F\'{i}sica, Universitat Aut\`{o}noma de Barcelona, ES-08193 Bellaterra (Barcelona), Spain}

\author{Christoph Hirche}
\affiliation{F\'{\i}sica Te\`{o}rica: Informaci\'{o} i Fen\`{o}mens Qu\`{a}ntics, Departament de F\'{i}sica, Universitat Aut\`{o}noma de Barcelona, ES-08193 Bellaterra (Barcelona), Spain}

\author{Gerardo Adesso}
\affiliation{Centre for the Mathematics and Theoretical Physics of Quantum Non-Equilibrium Dynamics, School of Mathematical Sciences, The University of Nottingham,
University Park, Nottingham NG7 2RD, United Kingdom}

\author{Andreas Winter}
\affiliation{F\'{\i}sica Te\`{o}rica: Informaci\'{o} i Fen\`{o}mens Qu\`{a}ntics, Departament de F\'{i}sica, Universitat Aut\`{o}noma de Barcelona, ES-08193 Bellaterra (Barcelona), Spain}
\affiliation{ICREA -- Instituci\'o Catalana de Recerca i Estudis Avan\c{c}ats, Pg. Lluis Companys 23, ES-08010 Barcelona, Spain}

\begin{abstract}
We derive fundamental constraints for the Schur complement of positive matrices, which provide an operator strengthening to recently established information inequalities for quantum covariance matrices, including strong subadditivity. This allows us to prove general results on the monogamy of entanglement and steering quantifiers in continuous variable systems with an arbitrary number of modes per party. A powerful hierarchical relation for correlation measures based on the log-determinant of covariance matrices is further established for all Gaussian states, which has no counterpart among quantities based on the conventional von Neumann entropy.
\end{abstract}
\date{July 16, 2016}

\maketitle

Quantum correlations embody the true departure of quantum mechanics from ``classical lines of thought'' \cite{schr}. In recent years, the mathematical development of quantum information theory and the experimental progress in controlling quantum systems have greatly advanced our physical understanding of quantum correlations. Different incarnations of quantum correlations, such as nonlocality, steering, entanglement, and discord, can arise in generally mixed multipartite states \cite{ABC}, and can all be exploited to achieve enhancements in information processing tasks over purely classical scenarios \cite{chuaniels}. On the other hand, quantum correlations also come with fundamental limitations not affecting classical ones, such as their {\it monogamy}, that is, the fact that quantum correlations cannot be freely shared across many subsystems \cite{terhal_2004,coffman_2000,koashi_2004,osborne_2006,hiroshima_2007,strongmono,toner_2009,streltsov_2012,regula_2014,AdessoSerafini,siewert_2015,MonSteer,lancien_2016}. Even such a limitation has useful applications, as it leads to the unconditional security of quantum key distribution \cite{devetak_2005}. Carefully identifying structural similarities and key differences between classical correlations and different types of quantum correlations is a paramount step to assess the resource power of the latter ones.

Interestingly, there are trademark quantum systems whose mathematical description is as simple as that of their classical counterparts. Such is the case for systems of (quantum) harmonic oscillators, e.g.~modes of the electromagnetic field, whose ground and thermal-equilibrium states belong to the special set of Gaussian states \cite{AdessoReview}. The study of these states and of the operations which preserve their Gaussianity is entirely ascribed to the characterization of covariance matrices (CMs) and their transformations using methods of linear algebra and symplectic geometry, which are widely applied in {\it classical} mechanics \cite{arnold}. Yet Gaussian states and channels realize paradigmatic platforms for continuous variable  {\it quantum} information processing \cite{cvbook}, and have been used to successfully demonstrate unconditional quantum teleportation in optical and atomic domains \cite{furuscience,spincoherent,fernnp13}, quantum cryptography with coherent states \cite{Grosshans2003}, and sub-shot-noise interferometry in gravitational wave detectors \cite{SGravi1,SGravi2}, among others. CMs also encode useful information on more general, non-Gaussian states \cite{Carles,AdessoNG}, leading to easily testable qualitative criteria and quantitative lower bounds for their non-classical properties \cite{extra}. One then wonders to what extent the description of  quantum correlations (in Gaussian states and beyond) can be advanced by further developing suitable classical methods.

{In this Letter we establish a collection of results for the Schur complement of a CM \red{--- a submatrix encoding conditional covariance, of use in linear algebra, numerical methods, probability and statistics ---} which bear a direct impact on  the quantitative characterization of various forms of quantum correlations in continuous variable systems, and in turn on their usefulness for quantum technologies.} Our analysis is inspired by recent works \cite{AdessoSerafini,Gross,Adesso}, in which an inequality sharing the same formal structure as the strong subadditivity of entropy was obtained, by purely algebraic methods, 
for the log-determinant of positive semidefinite matrices $V_{ABC} \geq 0$:
\begin{equation}\label{log det ineq 2}  \log\det V_{ABC} + \log\det V_{C}\, \leq\, \log\det V_{AC} +\log\det V_{BC}\,.  \end{equation}
If one identifies $V_{ABC}$ with the CM of a $(n_A+n_B+n_C)$-mode tripartite {\it quantum} system, which requires the extra condition
\begin{equation}\label{bonafide}
V_{ABC} + i \Omega_{ABC} \geq 0\,,
\end{equation}
encapsulating the uncertainty principle \cite{Simon94} (with $\Omega_{ABC} = \Omega^{\oplus (n_A+n_B+n_C)}$, and $\Omega =  {{\ 0\ \ 1}\choose{-1\ 0}}$  being the symplectic form),
then the scalar inequality (\ref{log det ineq 2}) has relevant implications, yielding alternative quantifiers of correlations \cite{AdessoSerafini}, monogamy constraints for Gaussian entanglement \cite{AdessoSerafini}, and limitations for joint steering of single-mode states in a multipartite scenario \cite{Adesso,MonSteer}. Here we show, {\it inter alia},
that such an inequality admits a powerful {\it operator} strengthening directly at the level of CMs, which allows us to substantially generalize the monogamy results of \cite{AdessoSerafini,Adesso,MonSteer} to multimode Gaussian or non-Gaussian states with any number of modes per party. For Gaussian states, we further establish a fundamental hierarchy for bipartite correlations based on the log-determinant, which does not hold for the standard entropy \cite{LiLuo}. In what follows, we first present our general results for Schur complements, and later explore their consequences in the quantum domain. We refer to a CM as any symmetric and positive semidefinite matrix, and to a {\it quantum} CM as one additionally obeying (\ref{bonafide}).

\paragraph*{Schur complement inequalities.}
The Schur complement is an operation that takes as input a $n\times n$ matrix $M$ and one of its $k\times k$ principal submatrices $A \sqsubset M$ (the shorthand $X\sqsubset Y$ means $X$ is a square submatrix of $Y$), and outputs a $(n-k)\times (n-k)$ matrix $M/A$. For a CM $M$ written in block form as
\begin{equation} M\, =\, \begin{pmatrix} A && X \\ X^T && B \end{pmatrix}\, , \label{M} \end{equation}
one defines
the Schur complement of $A$ in $M$ as $M/A =  B - X^T A^{-1} X$, and analogously  $M/B=A-XB^{-1} X^T$. The inverses here are taken on the support. We now list some useful properties of the Schur complement \cite{HornJohnson1}. (i) Determinant factorization:
$\det M = \det A \det(M/A)$;
(ii) Inversion formula:
\begin{equation} M^{-1}\ =\ \begin{pmatrix} A^{-1} + A^{-1} X (M/A)^{-1} X^T A^{-1} && -A^{-1} X (M/A)^{-1} \\[0.7ex] -(M/A)^{-1} X^T A^{-1} && (M/A)^{-1} \end{pmatrix}\, , \label{inv} \end{equation}
with the easy corollary $M^{-1} \big/ (M/A)^{-1}\ =\ A^{-1}$; (iii) Congruence invariance: conformally to the partition in \eqref{M}, we have
$\left(\begin{smallmatrix} N_1 & \\ & N_2 \end{smallmatrix}\right) M \left(\begin{smallmatrix} N_1^T & \\ & N_2^T \end{smallmatrix}\right) \Big/ N_1^T A N_1  \geq N_2 \left( M/A \right) N_2^T$, for all $N_1, N_2$, with equality if $N_1$ is invertible;  (iv) Quotient property: if $A\sqsubset M$ and $A_1\sqsubset A$, then $A/A_1\sqsubset M/A_1$ and moreover
$M/A = (M/A_1) \big/ (A/A_1)$; (v) Variational characterization: $M/A\ =\ \max\,\big\{ W:\ M\geq 0 \oplus W \big\}$ (i.e., the latter set of matrices $\{ W \}$ has a unique supremum given by  $M/A$), which means in particular that  $M\mapsto M/A$  is monotonically increasing and concave, while $M\mapsto (M/A)^{-1}$ is decreasing and convex.

Interestingly, it follows from the latter property (v) that $\log\det(M/A)=\Tr \log(M/A)$ is concave in $M$ thanks to the operator concavity of the logarithm. This leads to a simple proof of the central finding of \cite{Adesso}, i.e. the inequality
\begin{equation} \label{log det ineq}  \log\det V_{AC} +\log\det V_{BC} - \log\det V_{A} - \log\det V_{B} \geq 0\, ,  \end{equation}
valid for any quantum CM $V_{ABC}$. This is obtained by rewriting the left-hand side as $\log\det (V_{AC}/V_A) + \log\det (V_{BC}/V_B)$, which is a concave function of $V_{ABC}$, and by noticing that (\ref{log det ineq}) is saturated for pure states. Observe that (\ref{log det ineq 2}) and (\ref{log det ineq}) are equivalent expressions of strong subadditivity for the log-determinant, and can be converted into each other by `purifying' $V_{ABC}$ into a {\it symplectic} CM $V_{ABCD} \sqsupset V_{ABC}$ (describing a pure Gaussian state of  $ABCD$) \cite{Adesso}, where we recall that a matrix $S$ is symplectic if $S \Omega S^T = \Omega$, which implies $\det S=1$ \cite{Arvind95}.

\red{This suggests that the  Schur complement of CMs can define a natural notion of conditional covariance, as previously noted for classical Gaussian variables \cite{simoprl}.}
Hence we will fix $M \equiv V_{AB} \geq 0$ and study the Schur complement  $V_{AB}/V_B$, thereby proving that many well-known properties of the standard conditional entropy $H(A|B)=H(AB)-H(B)$, where $H$ denotes respectively Shannon or von Neumann entropy for a classical or quantum system,
have a straightforward equivalent within this framework.


We start by recalling that a canonical formulation of strong subadditivity in classical and quantum information theory is $H(A|BC) \leq H(A|C)$, i.e.~partial trace on the conditioning system \red{increases} the conditional entropy \cite{ArakiLieb,WehrlR,QuantumSSA,QuantumSSA2}. Guided by our formal analogy, our first result is thus a generalization of \eqref{log det ineq 2}.

\begin{thm}[Partial trace in the denominator \red{increases} Schur complement] \label{pt incr sch}
If $V_{ABC} \geq 0$ is any tripartite CM, then
\begin{equation} V_{ABC} / V_{BC}\ \leq\ V_{AC} / V_C\, . \label{INEQ 1} \end{equation}
\end{thm}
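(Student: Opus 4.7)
The plan is to reduce the inequality to a one-line consequence of the quotient property (iv), combined with the elementary fact that in any positive semidefinite block matrix the Schur complement of a diagonal block is bounded above by the complementary diagonal block.

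First, I would apply property (iv) to the nested chain $V_C \sqsubset V_{BC} \sqsubset V_{ABC}$, obtaining
\begin{equation*}
V_{ABC}/V_{BC} \;=\; \bigl(V_{ABC}/V_C\bigr)\,\big/\,\bigl(V_{BC}/V_C\bigr).
\end{equation*}
Second, I would identify the block structure of $W := V_{ABC}/V_C$. Writing $V_{ABC}$ in the bipartite block form $(AB)|C$ and computing directly --- or, equivalently, applying (iv) once more to each of the chains $V_C \sqsubset V_{AC} \sqsubset V_{ABC}$ and $V_C \sqsubset V_{BC} \sqsubset V_{ABC}$ --- one sees that, with respect to the natural splitting into $A$ and $B$ blocks,
\begin{equation*}
W \;=\; \begin{pmatrix} V_{AC}/V_C & Z \\ Z^T & V_{BC}/V_C \end{pmatrix}
\end{equation*}
for some off-diagonal coupling $Z$.

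Finally, since $V_{ABC}\geq 0$, a Schur complement of a positive matrix is positive, so $W\geq 0$; hence by the very definition of Schur complement,
\begin{equation*}
W\,\big/\,\bigl(V_{BC}/V_C\bigr) \;=\; V_{AC}/V_C \,-\, Z\,(V_{BC}/V_C)^{-1} Z^T \;\leq\; V_{AC}/V_C,
\end{equation*}
because the subtracted term is manifestly positive semidefinite. Combined with the first step this gives \eqref{INEQ 1}. The only real delicacy, rather than an obstacle, is that $V_C$ or $V_{BC}/V_C$ may be singular; this is handled by reading the inverses as pseudo-inverses on the support (as agreed in the paper) and invoking the variational characterization (v) to ensure the inequality is preserved under the limit $V_{ABC}\to V_{ABC}+\varepsilon \mathbb{1}$, $\varepsilon\downarrow 0$, where full-rankness of the blocks is automatic.
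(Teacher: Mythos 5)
Your proof is correct, but it takes a genuinely different route from the paper's. The paper proves \eqref{INEQ 1} in one line from the variational characterization (v): since $V_{ABC}\geq W_A\oplus 0_{BC}$ implies $V_{AC}\geq W_A\oplus 0_C$ (restricting a positive semidefinite matrix to a principal submatrix preserves positivity), the feasible set defining $V_{ABC}/V_{BC}$ is contained in the one defining $V_{AC}/V_C$, and the maxima compare accordingly. You instead invoke the quotient property (iv) to write $V_{ABC}/V_{BC}=\bigl(V_{ABC}/V_C\bigr)\big/\bigl(V_{BC}/V_C\bigr)$, identify $V_{AC}/V_C$ and $V_{BC}/V_C$ as the diagonal blocks of $W=V_{ABC}/V_C$ (which a direct computation confirms), and note that the further Schur complement merely subtracts the positive term $Z\,(V_{BC}/V_C)^{-1}Z^T$ from the $A$-diagonal block. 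Both arguments are sound. What the paper's route buys is that it never touches an inverse, so singular blocks require no special treatment; what your route buys is an explicit formula for the deficit, namely $V_{AC}/V_C - V_{ABC}/V_{BC} = Z\,(V_{BC}/V_C)^{-1}Z^T$, which exhibits the gap as the residual $A$--$B$ coupling after conditioning on $C$. Your handling of the degenerate case via $V_{ABC}+\varepsilon\mathds{1}$ also works: by (v) the perturbed Schur complements decrease monotonically to the unperturbed ones as $\varepsilon\downarrow 0$, so the inequality survives the limit.
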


\begin{proof}
Since $V_{ABC}\geq W_A \oplus 0_{BC}$ implies $V_{AC}\geq W_A\oplus 0_C$, employing property (v) we find
$V_{ABC} / V_{BC}\ =\ \max\big\{ W_A:\, V_{ABC}\geq W_A\oplus 0_{BC} \big\}\ \leq\ \max\big\{ W_A:\, V_{AC}\geq W_A\oplus 0_C \big\}\ =\ V_{AC} / V_C$.
\end{proof}

Clearly, taking the determinant of \eqref{INEQ 1} and applying the factorization property (i) of the Schur complement yields \eqref{log det ineq 2} immediately.
Notice further that the invariance of $V_{AB}/V_B$ under symplectic operations on $B$ (implied by the congruence property (iii)) and its monotonicity under partial trace, suffice to guarantee its monotonicity under general deterministic (i.e.~trace-preserving) Gaussian channels  $\Gamma_B$ on $B$:
$(\mathds{1}_A\oplus \Gamma_B)(V_{AB}) \,\big/\, \Gamma_B (V_B)= \big( S_{BC}\,( V_{AB}\oplus \sigma_C)\, S_{BC}^T \big)_{AB} \Big/ \big( S_{BC}\, (V_B\oplus \sigma_C)\, S_{BC}^T \big)_B \geq
\big( S_{BC}\, (V_{AB}\oplus \sigma_C)\, S_{BC}^T \big) \Big/ \big( S_{BC}\, (V_B\oplus \sigma_C)\, S_{BC}^T \big) = (V_{AB}\oplus \sigma_C) \big/ (V_B\oplus \sigma_C) = V_{AB} / V_B$. But there is more: perhaps surprisingly,  the Schur complement is also  monotonically increasing under general non-deterministic {\it classical} (i.e.~non quantum-limited) Gaussian operations on $B$. We recall that any such map acts at the level of CMs as \cite{nogo1,nogo2,nogo3}
\begin{equation}\label{CP gauss}
\Gamma_{B\rightarrow B'}:\ V_B\longmapsto \gamma_{B'} - \delta_{BB'}^T\,\left({\gamma_B+V_B}\right)^{-1}\,\delta_{BB'}\, ,
\end{equation}
where $\gamma_{BB'}=\left(\begin{smallmatrix} \gamma_B & \delta_{BB'} \\ \delta_{BB'}^T & \gamma_{B'} \end{smallmatrix}\right)>0$ is a positive matrix pertaining to a bipartite system $BB'$. If $\gamma_{BB'}$ is also a valid quantum CM obeying (\ref{bonafide}), then \eqref{CP gauss} corresponds to a (non-deterministic) completely positive Gaussian channel, but this restricting hypothesis plays no role in stating the following general result.

\begin{thm}[Classical Gaussian maps in the denominator increase Schur complement] \label{CP incr sch}
If $\Gamma_{B\rightarrow B'}$ is a non-deterministic classical Gaussian map as in \eqref{CP gauss}, with $\left(\begin{smallmatrix} \gamma_B & \gamma_{BB'} \\ \gamma_{BB'}^T & \gamma_{B'} \end{smallmatrix}\right)>0$, then
\begin{equation*}
\Gamma_{B\rightarrow B'}(V_{AB}) \big/ \Gamma_{B\rightarrow B'}(V_B)\ \geq\ V_{AB} / V_B\, .
\end{equation*}
\end{thm}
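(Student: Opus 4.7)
The plan is to represent both $\Gamma_{B\to B'}(V_{AB})$ and $\Gamma_{B\to B'}(V_B)$ as Schur complements of a single auxiliary block matrix, then to collapse the left-hand side via the quotient property (iv) and conclude with the variational characterization (v). Comparing \eqref{CP gauss} with the definition of Schur complement, one reads off immediately that $\Gamma_{B\to B'}(V_B)=P_{BB'}/(V_B+\gamma_B)$, where $P_{BB'}:=(V_B\oplus 0_{B'})+\gamma_{BB'}$. A direct block computation (the Gaussian dilation of $\Gamma_{B\to B'}$, carrying the $A$ block along passively) extends this to $\Gamma_{B\to B'}(V_{AB})=P/(V_B+\gamma_B)$ with
\begin{equation*}
P\,:=\,(V_{AB}\oplus 0_{B'})\,+\,(0_A\oplus\gamma_{BB'})\,.
\end{equation*}

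With these two identifications in hand, property (iv) applied to the nested submatrices $V_B+\gamma_B\sqsubset P_{BB'}\sqsubset P$ telescopes the left-hand side of the claimed inequality into a single Schur complement:
\begin{equation*}
\Gamma_{B\to B'}(V_{AB})\,\big/\,\Gamma_{B\to B'}(V_B)\,=\,\bigl(P/(V_B+\gamma_B)\bigr)\big/\bigl(P_{BB'}/(V_B+\gamma_B)\bigr)\,=\,P/P_{BB'}\,.
\end{equation*}
The theorem thus reduces to establishing $P/P_{BB'}\geq V_{AB}/V_B$, which falls out of property (v): for any $W_A$ witnessing $V_{AB}\geq W_A\oplus 0_B$, dilating by $0_{B'}$ and adding the positive matrix $0_A\oplus\gamma_{BB'}\geq 0$ gives $P\geq W_A\oplus 0_{BB'}$, whence $P/P_{BB'}\geq W_A$; taking the supremum over admissible $W_A$ delivers the desired bound.

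The residual bookkeeping is only to verify that the Schur complements are unambiguously defined, which follows from $\gamma_{BB'}>0$: this forces $\gamma_B>0$, $V_B+\gamma_B>0$, and $P_{BB'}\geq\gamma_{BB'}>0$. I expect no serious obstacle; the single conceptual step is to spot the Schur-complement dilation of $\Gamma_{B\to B'}$ on the enlarged system $AB\tilde{B}B'$ (with the two $B$-copies merged through $V_B+\gamma_B$), after which properties (iv) and (v) make the rest of the argument essentially computation-free and reveal why the hypothesis of complete positivity (which would require $\gamma_{BB'}+i\Omega\geq 0$) is not needed.
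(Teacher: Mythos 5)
Your proof is correct and follows essentially the same route as the paper's: you rewrite the action of $\Gamma_{B\to B'}$ as a Schur complement of the enlarged matrix, telescope with the quotient property (iv), and reduce the claim to the monotonicity statement that adding $0_A\oplus\gamma_{BB'}\geq 0$ to the conditioning block can only increase the Schur complement. The only cosmetic difference is that the paper invokes this last step as a stated bound, whereas you unwind it explicitly from the variational characterization (v); the substance is identical.
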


\begin{proof}
Observing that \eqref{CP gauss} can be rewritten as $\Gamma_{B\rightarrow B'}:\ V_B\longmapsto (\gamma_{BB'}+V_B) \big/ (\gamma_B+V_B)$,
we obtain: $\Gamma_{B\rightarrow B'}(V_{AB}) \big/ \Gamma_{B\rightarrow B'}(V_B) =  \big((\gamma_{BB'}+V_{AB}) / (\gamma_B+V_B)\big) \Big/ \big((\gamma_{BB'}+V_B) / (\gamma_B+V_B) \big) =
  (\gamma_{BB'}+V_{AB})  \big/ (\gamma_{BB'}+V_B) \geq V_{AB} / V_B$,
where we used  property (iv) together with the bound $\left.\left(\begin{smallmatrix} A & X \\ X^T & B+\sigma \end{smallmatrix}\right) \Big/ (B+\sigma)\ \geq\ \left(\begin{smallmatrix} A & X \\ X^T & B \end{smallmatrix}\right) \Big/ B\right.$.
\end{proof}

Next, we would like to obtain from \eqref{INEQ 1} an operator generalization of \eqref{log det ineq} by applying the symplectic purification trick: this requires a certain amount of work. From now on, we will always assume that the $V$ matrices are bona fide quantum CMs obeying  (\ref{bonafide}). We first note that if a bipartite quantum CM $V_{AB}$ is symplectic, then $V_{AB}^{-1} = \Omega_{A}^T V_{AB}^T \Omega_{AB} = \Omega_{AB}^T V_{AB} \Omega_{AB}$, which by comparison with (\ref{inv}) yields $V_{AB}/V_A=\Omega_B^T V_B^{-1} \Omega_B$. In conjunction with property (iv), this implies
\begin{equation}\label{lemma2}
V_{ABC} \mbox{ is symplectic \ \ $\Rightarrow$ \ \ } V_{AB}/V_B = \Omega_A^T (V_{AC}/V_C)^{-1} \Omega_A\,.
\end{equation}
We then get the following for any tripartite quantum system.
\begin{thm}[Schur complement of quantum CMs is monogamous] \label{sch compl mon}
If $V_{ABC}\geq i \Omega_{ABC}$ is any tripartite quantum CM, then
\begin{equation} V_{AC} / V_A\ \geq\ \Omega_C^T (V_{BC} / V_B)^{-1} \Omega_C\, . \label{INEQ 2} \end{equation}
\end{thm}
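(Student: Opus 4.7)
The plan is to reduce the claim to Theorem~\ref{pt incr sch} via a symplectic purification, using the identity \eqref{lemma2} to convert a Schur complement on the purified system into one conjugated by $\Omega_C$ on the original subsystems.

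First, I would adjoin an ancillary mode-set $D$ and pick a symplectic extension $V_{ABCD} \sqsupset V_{ABC}$, corresponding to a pure Gaussian state of $ABCD$; such an extension always exists for a quantum CM, and $V_{ACD}$ is then automatically a bona fide CM as a principal submatrix of a positive matrix.

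Next, I would apply Theorem~\ref{pt incr sch} directly to the tripartite CM $V_{ACD}$, identifying $C$ with the ``$A$''-role in the statement, $D$ with the party to be traced out from the conditioning, and $A$ with the residual conditioning system. This gives
\begin{equation*} V_{ACD}/V_{AD}\ \leq\ V_{AC}/V_{A}\, . \end{equation*}

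Finally, I would invoke \eqref{lemma2} on the symplectic $V_{ABCD}$ under the coarse-grained tripartition $(C, AD, B)$, i.e.~substituting $A\mapsto C$, $B \mapsto AD$, $C \mapsto B$ in that formula. This yields the identity
\begin{equation*} V_{ACD}/V_{AD}\ =\ \Omega_C^T (V_{BC}/V_B)^{-1} \Omega_C\, , \end{equation*}
and chaining with the previous inequality produces \eqref{INEQ 2}.

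I expect the main subtlety to be of a bookkeeping nature rather than conceptual: one must check that \eqref{lemma2}, stated there for a tripartite symplectic CM, applies verbatim to any coarse-grained tripartition of the four-partite symplectic $V_{ABCD}$ (with the merged middle block $AD$ carrying the symplectic form $\Omega_A\oplus\Omega_D$), and verify that the $\Omega_C$ conjugation on the $C$ block goes through unchanged when $C$ is flanked by the composite block $AD$ on one side and $B$ on the other. Once this is sorted out, the argument is essentially three lines.
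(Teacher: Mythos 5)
Your proposal is correct and is essentially identical to the paper's own argument: a symplectic purification $V_{ABCD}$, followed by Theorem~\ref{pt incr sch} applied to $V_{ACD}$ to get $V_{AC}/V_A \geq V_{ACD}/V_{AD}$, and then \eqref{lemma2} applied to the coarse-grained tripartition $(C, AD, B)$ of the symplectic $V_{ABCD}$ to identify $V_{ACD}/V_{AD}$ with $\Omega_C^T (V_{BC}/V_B)^{-1}\Omega_C$. The bookkeeping point you flag is harmless, exactly as you anticipate (the paper also sketches a second route via concavity of $V_{AC}/V_A - \Omega_C^T(V_{BC}/V_B)^{-1}\Omega_C$ and saturation on symplectic CMs, but its primary proof is yours).
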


\begin{proof}
Consider a symplectic purification $V_{ABCD}$ of the system $ABC$. Applying first \eqref{INEQ 1} and then (\ref{lemma2}) yields  \eqref{INEQ 2}: $V_{AC}/V_A\ \geq\ V_{ACD}/V_{AD}\ =\ \Omega_C^T (V_{BC}/V_B)^{-1} \Omega_C$.
Alternatively, observe that the difference between right- and left-hand side of \eqref{INEQ 2} is concave in $V_{ABC}$ (as $V_{AC}/V_A$ is concave and $(V_{BC}/V_B)^{-1}$ is convex), and it vanishes on symplectic CMs by (\ref{lemma2}).
\end{proof}

We remark that the operator inequalities (\ref{INEQ 1}) and (\ref{INEQ 2}) are significantly stronger than the scalar ones (\ref{log det ineq 2}) and (\ref{log det ineq}) reported in \cite{AdessoSerafini,Gross,Adesso}, as the former establish algebraic limitations directly at the  level of CMs, in a similar spirit to the marginal problem \cite{Tyc2008},
for arbitrary multipartite states.
Equipped with these powerful tools, we proceed to investigate applications to quantum correlations, namely steering and entanglement.

\paragraph*{Gaussian steerability and its monogamy.}
Consider a $n$-mode continuous variable quantum system, and denote by $\nu_i(A)$ the $i$--th smallest symplectic eigenvalue of a positive definite CM $0<A=A^T\in \mathcal{M}_{2n}(\mathds{R})$. We define the two functions
\begin{equation} g_\pm (A) =  {\sum}_{i=1}^n\, \max\,\big\{\pm \log \nu_i(A),\, 0\big\}\, . \label{func g} \end{equation}

The function $g_-$ finds many applications in continuous variable quantum information. For instance, the logarithmic negativity \cite{VidalWerner,plenioprl} of a bipartite state $\rho_{AB}$, defined as  $E_N(\rho_{AB}) =  \log \|\rho_{AB}^{\text{\reflectbox{$\Gamma$}}} \|_1$ (where \reflectbox{$\Gamma$} denotes partial transposition), takes the form $E_N(\rho_{AB})=g_-(\tilde{V}_{AB})$ if $\rho_{AB}$ is a Gaussian state with quantum CM $V_{AB}$; here, the partial transpose of the CM is given by $\tilde{V}_{AB}=\Theta V_{AB} \Theta$, with $\Theta=\left(\begin{smallmatrix} \mathds{1} & \\ & -\mathds{1} \end{smallmatrix}\right)_A\oplus \mathds{1}_B$. Furthermore, a quantitative measure of Gaussian steerability (i.e., steerability by Gaussian measurements) has been recently introduced for any state $\rho_{AB}$ with quantum CM $V_{AB}$ \cite{steerability}, that takes the form
\begin{equation}
\mathcal{G}(A\rangle B)_V = g_-(V_{AB}/V_A)\,,
\label{G steer}
\end{equation}
\red{in the case of party $A$ steering party $B$. Notice that $\mathcal{G}(A\rangle B)_V>0$ is necessary and sufficient for ``$A$ to $B$'' steerability of a Gaussian state with quantum CM $V_{AB}$ by means of Gaussian measurements on $A$ \cite{steerability,Wiseman}, but is only sufficient if either the state \cite{JOSAB} or the measurements \cite{NoGauss1,NoGauss2} are non-Gaussian.}

The functions $g_\pm$ have useful properties (see \cite{epaps}\nocite{sympsplemma,Gosson,logconcave,Bhatia,williamson,BhatiaMatrix} for details):
$g_\pm(A)=g_\pm(SAS^T)$ for all symplectic $S$, $g_\pm (A^{-1}) = g_\mp (A)$, $g_+(A)-g_-(A)=\frac{1}{2}\,\log\det A$, $g_\pm (A\oplus B)=g_\pm (A) + g_\pm (B)$, $g_-(A)$ is monotonically decreasing and convex in $A$, while $g_+(A)$ is monotonically increasing but neither convex nor concave in $A$, and finally  $g_-$ is superadditive in the subsystems,
\begin{equation}\label{dec red g- eq}
g_-(V_{AB}) \geq g_-(V_A) + g_-(V_B)\,.
\end{equation}

Based on these facts, whose proof relies on recent advances in the study of symplectic eigenvalues \cite{bhatia15}, we can prove fully general properties of the steerability measure \eqref{G steer}, extending the results of \cite{steerability} where these properties were only proven in the special case of one-mode steered subsystem ($n_B=1$).

\begin{thm}[Properties of Gaussian steerability] \label{G prop}
\red{
(1)  ${\cal G}(A\rangle B)_V$ is convex and decreasing in the CM $V_{AB}$;
(2) ${\cal G}(A\rangle B)$ is additive under tensor products, i.e.~under direct sums of CMs,
${\cal G}(A_1 A_2\rangle B_1 B_2)_{V_{A_1B_1}\oplus W_{A_2 B_2}} =  \mathcal{G}(A_1\rangle B_1)_{V_{A_1B_1}} +  \mathcal{G}(A_2\rangle B_2)_{W_{A_2 B_2}}$;
(3) for arbitrary states, ${\cal G}(A\rangle B)$ is decreasing under general, non-deterministic Gaussian maps on the steering party $A$;
(4) for Gaussian states, ${\cal G}(A\rangle B)$ is decreasing under general, non-deterministic Gaussian maps on the steered party $B$;
(5) for any quantum CM $V_{ABC}$,  it holds ${\cal G}(A\rangle C)_V  \leq  g_+(V_{BC}/V_B)$.}
\end{thm}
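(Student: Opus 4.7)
The plan is to leverage the identity $\mathcal{G}(A\rangle B)_V = g_-(V_{AB}/V_A)$ and reduce each of the five claims to a combination of the Schur complement properties (i)--(v), Theorems~\ref{CP incr sch} and~\ref{sch compl mon}, and the listed properties of $g_\pm$. Items (1)--(3) and (5) unpack into short chains of inequalities, while (4) is the main obstacle and requires a more involved Schur-complement computation.

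For (1), the map $V_{AB}\mapsto V_{AB}/V_A$ is concave and monotone non-decreasing by (v), and $g_-$ is convex and monotone non-increasing, so the composition is convex and non-increasing in $V_{AB}$. For (2), the Schur complement respects block-diagonal structure, $(V_{A_1B_1}\oplus W_{A_2B_2})/(V_{A_1}\oplus W_{A_2}) = (V_{A_1B_1}/V_{A_1})\oplus(W_{A_2B_2}/W_{A_2})$, and $g_-$ is additive on direct sums. For (3), Theorem~\ref{CP incr sch} applied with the roles of $A$ and $B$ swapped yields $\Gamma_{A\to A'}(V_{AB})/\Gamma_{A\to A'}(V_A) \geq V_{AB}/V_A$, and $g_-$ being non-increasing concludes; the whole argument lives at the level of CMs, so Gaussianity of the underlying state plays no role. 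For (5), Theorem~\ref{sch compl mon} provides $V_{AC}/V_A \geq \Omega_C^T(V_{BC}/V_B)^{-1}\Omega_C$; applying $g_-$ (non-increasing), then symplectic invariance of $g_-$ (since $\Omega_C$ is symplectic), and the identity $g_-(X^{-1})=g_+(X)$ yields $\mathcal{G}(A\rangle C)\leq g_+(V_{BC}/V_B)$.

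For (4), the first step is to derive how the Schur complement $V_{AB}/V_A$ transforms under a Gaussian map $\Gamma_{B\to B'}$ acting on the steered party. Writing $\Gamma_{B\to B'}(V_B) = (\gamma_{BB'}+V_B\oplus 0)/(\gamma_B+V_B)$ and embedding into the tripartite extension $\tilde M = \bigl(\begin{smallmatrix} V_A & V_{AB} & 0 \\ V_{AB}^T & V_B+\gamma_B & \delta \\ 0 & \delta^T & \gamma_{B'}\end{smallmatrix}\bigr)$, two applications of the quotient property (iv) establish the operator identity $V_{AB'}^{\Gamma}/V_A^{\Gamma} = \Gamma_{B\to B'}(V_{AB}/V_A)$, where $V_{AB'}^{\Gamma} := \tilde M/(V_B+\gamma_B)$ is the post-map joint CM and $V_A^{\Gamma}$ its $A$-marginal. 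The claim thus reduces to proving $g_-(\Gamma_{B\to B'}(W)) \leq g_-(W)$ for $W = V_{AB}/V_A$.

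To close this last step, I would Stinespring-dilate $\Gamma_{B\to B'}$ into a symplectic transformation on $B$ coupled to a pure Gaussian ancilla followed by a principal-submatrix restriction realising both the partial trace and the filtering; symplectic invariance of $g_-$, additivity on direct sums, the vanishing of $g_-$ on symplectic (pure-Gaussian) CMs, and superadditivity of $g_-$ under restriction to principal submatrices then combine to produce the desired bound. The filtering/post-selection component of a genuinely non-deterministic map is the delicate point, and this is where the restriction to Gaussian states $V_{AB}$ enters: it guarantees that the state after conditioning is itself Gaussian so the CM-level analysis is exhaustive, with a continuity/limit argument (regularising $\gamma_{BB'}$ to full rank and passing appropriate limits) extending the monotonicity of $g_-$ to the full non-deterministic class encoded by (5).
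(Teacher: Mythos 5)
Parts (1), (2), (3) and (5) of your proposal are correct and coincide with the paper's proofs: convexity/monotonicity of $g_-$ composed with the concave monotone Schur complement, additivity on direct sums, Theorem~\ref{CP incr sch} applied to the steering party, and Theorem~\ref{sch compl mon} combined with $g_-(X^{-1})=g_+(X)$ and symplectic invariance. Your first reduction in (4) is also the right one and matches the paper: the quotient property gives $\tilde{V}_{AB}/\tilde{V}_A=(W_{BC}+\gamma_C)/(W_C+\gamma_C)$ with $W_{BC}=V_{ABC}/V_A$, so everything hinges on showing $g_-\left((W_{BC}+\gamma_C)/(W_C+\gamma_C)\right)\leq g_-(W_{BC})$.

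It is at this last step that your argument has a genuine gap. The ingredients you invoke --- symplectic invariance, additivity on direct sums, vanishing of $g_-$ on symplectic CMs, and superadditivity under restriction to principal submatrices --- suffice only for the \emph{deterministic} part of the map (ancilla $+$ symplectic $+$ partial trace). The non-deterministic part is a Gaussian measurement followed by conditioning, which at the CM level is the Schur complement $(W_{BC}+\gamma_C)/(W_C+\gamma_C)$, \emph{not} a principal-submatrix restriction; no dilation turns the filtering into one, and $g_-$ has no obvious monotonicity under Schur complements of this form (indeed the appendix remarks that neither $g_-(V_{AB})\geq g_-(V_A)+g_-(V_{AB}/V_A)$ nor the reverse holds in general). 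The paper closes this step with machinery your proposal does not supply: after using concavity in $\gamma_C$ to reduce to symplectic (pure) $\gamma_C$, it invokes a second variational principle expressing $g_-(W)$ as a \emph{minimum} of $\frac{1}{2}\log\det Z$ over quantum CMs $Z$ with $\left(\begin{smallmatrix} W & \Omega \\ \Omega^T & Z\end{smallmatrix}\right)\geq 0$, propagates an optimal $Z_{BC}$ through the measurement by a block-matrix manipulation, and is then left to prove $\det\left((Z_{BC}+\gamma_C)/(Z_C+\gamma_C)\right)\leq\det Z_{BC}$. Even this scalar determinant inequality is, as the authors stress, hard at the CM level: they prove it by passing to the Hilbert-space picture, restating it as ``purity of Gaussian states increases under pure Gaussian measurements'' and establishing that via a Cauchy--Schwarz argument for block matrices in Hilbert--Schmidt norm. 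Your proposal would need to either reproduce this chain or find a substitute for it; as written, the monotonicity of $g_-$ under the conditioning step is asserted rather than proved, and this is precisely the nontrivial content of part (4).
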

\begin{proof} See Appendix \cite{epaps} for detailed proofs. \end{proof}

\red{Theorem~\ref{G prop} establishes ${\cal G}(A \rangle B)_V$ as a convex monotone for arbitrary Gaussian states with quantum CM $V_{AB}$ under arbitrary local Gaussian operations on either the steering or the steered parties, hence fully validating it within the Gaussian subtheory of the recently formulated resource theory of steering \cite{resource}.} Moreover, our framework allows us to address the general problem of the monogamy of ${\cal G}(A \rangle B)$ for {\it arbitrary} (Gaussian or not) multimode states. For a state with quantum CM $V_{AB_1\ldots B_k}$, consider the following inequalities
\begin{eqnarray}
\mathcal{G}(A\rangle B_1\ldots B_k) &\geq& {\sum}_{j=1}^k \mathcal{G}(A\rangle B_j) \label{mon steer 1}\,,\\
\mathcal{G}(B_1\ldots B_k \rangle A) &\geq& {\sum}_{j=1}^k \mathcal{G}(B_j\rangle A) \label{mon steer 2}\,.
\end{eqnarray}
In a very recent study \cite{MonSteer}, both inequalities were proven in the special case of a  $(k+1)$-mode system with one single mode per party, i.e., $n_A=n_{B_j}=1$ ($j=1,\ldots,k$). We now show that  only one of these constraints holds in full generality.

\begin{thm}[Monogamy of Gaussian steerability] \label{G mono}
(a) Ineq.~(\ref{mon steer 1}) holds for any multimode quantum CM $V_{A B_1 \ldots B_k}$. (b) Ineq.~(\ref{mon steer 2}) holds for any multimode quantum CM $V_{A B_1 \ldots B_k}$ such that either $A$ comprises a single mode ($n_A=1$), or $V_{A B_1 \ldots B_k}$ is symplectic ($\det V_{A B_1 \ldots B_k}=1$), but can be violated otherwise.
\end{thm}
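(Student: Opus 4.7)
For part~(a), I would note that $V_{AB_1\ldots B_k}/V_A$ is a CM on the composite system $B_1\oplus\cdots\oplus B_k$, and that a direct inspection of the defining formula of the Schur complement gives $\bigl(V_{AB_1\ldots B_k}/V_A\bigr)_{B_j}=V_{AB_j}/V_A$: taking a principal submatrix on the ``kept'' side commutes with the Schur complement. Iterating the superadditivity of $g_-$, \eqref{dec red g- eq}, across the $k$ subsystems of the CM $V_{AB_1\ldots B_k}/V_A$ then yields $g_-\bigl(V_{AB_1\ldots B_k}/V_A\bigr)\geq\sum_{j=1}^{k}g_-\bigl(V_{AB_j}/V_A\bigr)$, which is exactly \eqref{mon steer 1}.

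For (b) with $n_A=1$, I would set $X_j:=V_{AB_j}/V_{B_j}$ and $X:=V_{AB_1\ldots B_k}/V_{B_1\ldots B_k}$: these are $2\times 2$ CMs whose (single) symplectic eigenvalues are $\nu_j=\sqrt{\det X_j}$ and $\nu=\sqrt{\det X}$. Applying Theorem~\ref{sch compl mon} to each marginal $V_{AB_iB_j}$ with the single-mode party $A$ playing the role of $C$ yields the pairwise operator bound $X_j\geq\Omega_A^T X_i^{-1}\Omega_A$, whose determinant reads $\nu_i\nu_j\geq 1$. This forces at most one index $j^{\star}$ to satisfy $\nu_{j^{\star}}<1$, so at most one term in $\sum_j\mathcal{G}(B_j\rangle A)=\sum_j[-\log\nu_j]^+$ is nonzero, namely $g_-(X_{j^{\star}})$. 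Theorem~\ref{pt incr sch} then supplies $X\leq X_{j^{\star}}$, hence $\nu\leq\nu_{j^{\star}}$ by determinant monotonicity for positive $2\times 2$ matrices, and $g_-(X)\geq g_-(X_{j^{\star}})$ closes the case.

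For (b) with $V_{AB_1\ldots B_k}$ symplectic, I would use the bipartite pure-state identity $V/V_{B_1\ldots B_k}=\Omega_A^T V_A^{-1}\Omega_A$ (explicitly derived just before \eqref{lemma2}) to recognise the LHS as $g_+(V_A)$, which for the pure state coincides with the log-det bipartite entanglement across the cut $A\,|\,B_1\ldots B_k$. Then I would invoke two auxiliary ingredients: (i) Gaussian steerability is upper-bounded by the log-det Gaussian entanglement of the reduced CM, so $\mathcal{G}(B_j\rangle A)\leq E_{\log\det}(A:B_j)$ for each $j$; (ii) the Adesso--Serafini monogamy of the log-det entanglement for pure Gaussian states (a scalar consequence of iterating \eqref{log det ineq 2}) gives $\sum_j E_{\log\det}(A:B_j)\leq E_{\log\det}(A:B_1\ldots B_k)=g_+(V_A)$. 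Chaining these establishes \eqref{mon steer 2}. The \textbf{main obstacle} is precisely this pure-state step: operator-level bounds from Theorems~\ref{pt incr sch} and \ref{sch compl mon} give only $Y_j\leq V_A$ (with $Y_j:=V_{AB_{\neq j}}/V_{B_{\neq j}}$ obtained from \eqref{lemma2}, so that $\mathcal{G}(B_j\rangle A)=g_+(Y_j)$), and summing the componentwise symplectic-eigenvalue bound $\sigma_l(Y_j)\leq\mu_l(V_A)$ naively over $(j,l)$ loses a factor of $k$; thus the scalar SSA \eqref{log det ineq 2} is essential to close the argument sharply.

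Finally, to exhibit violation outside these two cases, I would construct a non-pure Gaussian state on $A B_1 B_2$ with $n_A=2$ in which $(A_1,B_1)$ and $(A_2,B_2)$ are independently two-mode squeezed but classically correlated additive noise in the joint $B_1B_2$ sector suppresses the joint Schur complement $V_{AB_1B_2}/V_{B_1B_2}$. For suitable squeezing and noise parameters, each $\mathcal{G}(B_j\rangle A)$ inherits most of the squeezing-driven steering while $\mathcal{G}(B_1B_2\rangle A)$ falls strictly below the sum, exhibiting the desired violation.
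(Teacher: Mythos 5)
Your treatment of (a) and of both positive branches of (b) follows essentially the same route as the paper: (a) is superadditivity of $g_-$ applied to the diagonal blocks $V_{AB_j}/V_A$ of $V_{AB_1\ldots B_k}/V_A$; the $n_A=1$ case combines the pairwise no-joint-steering constraint (the determinant of your bound from Theorem~\ref{sch compl mon} is exactly the scalar inequality \eqref{log det ineq} the paper invokes) with monotonicity of the Schur complement under partial trace of the steering party; and the symplectic case is reduced to $\mathcal{G}(B_j\rangle A)\leq\mathcal{E}_2(A:B_j)$ plus the monogamy of $\mathcal{E}_2$, which is precisely the paper's deferral to Corollary~\ref{E mono}.

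Two caveats. First, your parenthetical that $\sum_j\mathcal{E}_2(A:B_j)\leq\mathcal{E}_2(A:B_1\ldots B_k)$ is ``a scalar consequence of iterating \eqref{log det ineq 2}'' undersells what is needed: the Adesso--Serafini result covers only one mode per party, and the general multimode version (Corollary~\ref{E mono}) rests on the hierarchy $\tfrac12\mathcal{I}_2\geq\mathcal{E}_2$ of Theorem~\ref{I2E}, whose proof requires the geometric-mean/Riccati construction, not just log-det strong subadditivity; you should cite that ingredient explicitly or your chain does not close for multimode $B_j$. Second, the clause ``but can be violated otherwise'' is part of the statement and needs an explicit witness. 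Your squeezing-plus-correlated-noise construction is plausible but unverified; the paper instead exhibits a concrete $8\times8$ quantum CM with $n_A=2$, $n_{B_1}=n_{B_2}=1$ for which $\mathcal{G}(B_1B_2\rangle A)-\mathcal{G}(B_1\rangle A)-\mathcal{G}(B_2\rangle A)\approx -0.82$. Until you pin down parameters and check both the bona fide condition \eqref{bonafide} and the sign of the deficit, that part of the theorem remains unproven.
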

\begin{proof} (a) It suffices to prove the inequality $\mathcal{G}(A\rangle BC)\, \geq\, \mathcal{G}(A\rangle B) + \mathcal{G}(A\rangle C)$ for a tripartite quantum CM $V_{ABC}$, as (\ref{mon steer 1}) would follow by iteration. Observe that $V_{AB}/V_A$ and $V_{AC}/V_A$ form the diagonal blocks of the bipartite matrix $V_{ABC}/V_A$. Applying \eqref{dec red g- eq} one thus obtains $\mathcal{G}(A\rangle BC)_V = g_-(V_{ABC}/V_A) \geq g_-(V_{AB}/V_A) + g_-(V_{AC}/V_A) = \mathcal{G}(A\rangle B)_V + \mathcal{G}(A\rangle BC)_V$, concluding the proof. (b) For the case $n_A=1$ with $n_{B_j}$ arbitrary, one exploits the fact that only one term $\mathcal{G}(B_j\rangle A)$ in the right-hand side of (\ref{mon steer 2}) can be nonzero, due to the impossibility of jointly steering a single mode by Gaussian measurements as implied by (\ref{log det ineq}) \cite{Adesso}, combined with the monotonicity of $\mathcal{G}(B_1\ldots B_k \rangle A)$ under partial traces on the steering party as implied by Theorem~\ref{G prop}. Finally, the case when $V_{A B_1 \ldots B_k}$ is symplectic, i.e.~corresponding to a {\it pure} multimode Gaussian state, follows from the forthcoming Corollary~\ref{E mono} (see \cite{epaps} for further details).
\end{proof}

The Gaussian steerability is thus not monogamous with respect to a common steered party $A$ when the latter is made of two or more modes, with violations of (\ref{mon steer 2}) existing already in a tripartite setting ($k=2$) with $n_{B_1}=n_{B_2}=1$ and $n_A=2$; a counterexample is reported in \cite{epaps}. 
What is truly monogamous is the log-determinant of the Schur complement, which only happens to coincide with the function $g_-$ when $n_A=1$.

\paragraph*{Gaussian entanglement and correlations hierarchy.}
In this last section, we specialize our attention to Gaussian states \cite{AdessoReview}. The R\'enyi-2 entropy of a $n$-mode Gaussian state $\rho$ with quantum CM $V$ is given by half the log-determinant of the latter, ${\cal S}_2(\rho) = -\log \Tr \rho^2 = \frac12 \log \det V$, and is equivalent (up to an additive constant) to the classical Boltzmann--Shannon entropy of the Wigner distribution of  $\rho$ \cite{AdessoSerafini}. Owing to the strong subadditivity inequality (\ref{log det ineq 2}), one can define faithful R\'enyi-2 measures of total correlations ${\cal I}_2$ and entanglement ${\cal E}_2$ for a bipartite Gaussian state with quantum CM $V_{AB}$ \cite{AdessoSerafini}, given by
\begin{eqnarray}
{\cal I}_2(A:B)_V &=&  \mbox{$\frac{1}{2}\, \log \frac{\det V_A \det V_B}{\det V_{AB}}$}\,, \label{I2}\\
\mathcal{E}_2(A:B)_V & =& \inf_{\gamma_{AB}\, \text{pure}:\ \gamma_{AB}\,\leq\, V_{AB}}\, \mbox{$\frac{1}{2}\, \log\det \gamma_A$}\, , \label{E2}
\end{eqnarray}
where the infimum over pure Gaussian states with symplectic CM $\gamma_{AB}$ in (\ref{E2}) amounts to the Gaussian convex roof \cite{GEOF}.

In \cite{LiLuo}, the inequality ${\cal I}\geq 2\mathcal{E}$ is identified as a fundamental postulate for a consistent theory of quantum versus classical correlations in bipartite systems, for an arbitrary measure of entanglement $\mathcal{E}$ and of total correlations ${\cal I}$. \red{This follows from the fact that for pure states classical and quantum correlations are equal and add up to the total correlations \cite{Groisman2005}, while for mixed states classical correlations are intuitively expected to exceed quantum ones, which include entanglement \cite{Henderson2001,Groisman2005,LiLuo}.} However, such a relation can already be violated for two-qubit states (Werner states) when $\mathcal{E}$ is the entanglement of formation defined via the usual von Neumann entropy \cite{horodecki_2009}, and ${\cal I}$ the corresponding mutual information. In larger dimensions it may even happen that ${\cal I}<\mathcal{E}$  \cite{hayden06}, undermining the interpretation of the entanglement of formation as just a fraction of total correlations.

Here we show that ${\cal I}_2\geq 2 \mathcal{E}_2$ \emph{does hold} for Gaussian states of arbitrarily many modes using the R\'enyi-2 quantifiers \footnote{This holds in fact for all R\'enyi-$\alpha$ quantifiers with $\alpha \geq 2$ \cite{epaps}.}.
\begin{thm}[Gaussian R\'enyi-2 correlations hierarchy]\label{I2E}
Let $AB$ be in an arbitrary Gaussian quantum state. Then
\begin{equation}
\mbox{$\frac12 {\cal I}_2(A:B) \geq \mathcal{E}_2(A:B) \geq {\cal G}(A \rangle B)$}\, . \label{I>2E}
\end{equation}
If $AB$ is in a pure Gaussian state, all the above three quantities coincide with the reduced R\'enyi-2 entropy $\frac{1}{2}\log\det V_A$.
\end{thm}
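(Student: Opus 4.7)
My plan proceeds by dispatching the pure-state case and the right-hand inequality $\mathcal{E}_2\geq\mathcal{G}$ first, using the Schur-complement machinery already developed, then tackling $\tfrac{1}{2}\mathcal{I}_2\geq\mathcal{E}_2$, which I expect to be the main obstacle. For the pure-state boundary, $V_{AB}$ is symplectic, so $\det V_{AB}=1$ and the symplectic spectra of $V_A$ and $V_B$ coincide (hence $\det V_A=\det V_B$). Substituting in \eqref{I2} gives $\tfrac{1}{2}\mathcal{I}_2(A:B)_V=\tfrac{1}{2}\log\det V_A$. The trivial competitor $\gamma=V$ is admissible in \eqref{E2} and yields the same value. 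For $\mathcal{G}$, the bipartite instance of identity~(\ref{lemma2}) gives $V_{AB}/V_A=\Omega_B^T V_B^{-1}\Omega_B$, whose symplectic eigenvalues are the reciprocals $1/\nu_i(V_B)\in(0,1]$, so $g_-(V_{AB}/V_A)=\sum_i\log\nu_i(V_B)=\tfrac{1}{2}\log\det V_B=\tfrac{1}{2}\log\det V_A$.

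For the right inequality on mixed CMs, I would fix any pure competitor $\gamma_{AB}\leq V_{AB}$ in \eqref{E2}. Property~(v) gives $\gamma_{AB}\geq 0_A\oplus(\gamma_{AB}/\gamma_A)$; chaining with $V_{AB}\geq\gamma_{AB}$ and reapplying the variational characterization, one obtains $V_{AB}/V_A\geq\gamma_{AB}/\gamma_A$. Monotonicity of $g_-$ then gives $g_-(V_{AB}/V_A)\leq g_-(\gamma_{AB}/\gamma_A)=\tfrac{1}{2}\log\det\gamma_A$ by the same pure-state computation. Taking the infimum over $\gamma$ on the right yields $\mathcal{G}(A\rangle B)_V\leq\mathcal{E}_2(A:B)_V$.

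For the harder left inequality, since $\mathcal{E}_2$ is an infimum, it suffices to exhibit a single pure $\gamma\leq V$ with $(\det\gamma_A)^2\leq\det V_A\det V_B/\det V_{AB}$---equivalently, by the pure-state identity, a pure witness with $\mathcal{I}_2(A:B)_\gamma\leq\mathcal{I}_2(A:B)_V$. My candidate is the Williamson \textit{pure core} $\gamma=S^T S$ drawn from the decomposition $V=S^T D S$ (with $S$ symplectic and $D=\bigoplus_i\nu_i I_2$, $\nu_i\geq 1$): then $V-\gamma=S^T(D-I)S\geq 0$ is automatic, and $\gamma$ is pure. Letting $S_A$ and $S_B$ denote the column-blocks of $S$ corresponding to $A$ and $B$, one has $V_A=S_A^T D S_A$ and $\gamma_A=S_A^T S_A$; using $\det V=\det D$ (since $\det S=1$), the required bound reduces to the matrix inequality $\det D\cdot(\det(S_A^T S_A))^2\leq\det(S_A^T D S_A)\det(S_B^T D S_B)$. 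This is where I expect the heavy lifting: the inequality couples the $A/B$ partition---invisible in the Williamson frame---to the spectral data of $S$, and should follow from a Cauchy--Binet expansion combined with the Pl\"ucker-type identities implied by the symplecticity constraint $S\Omega S^T=\Omega$. The pure-state case $D=I$ collapses the inequality to $\det V_A=\det V_B$, providing a consistency check.
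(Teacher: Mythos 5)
Your treatment of the pure-state case and of the right-hand inequality $\mathcal{E}_2(A:B)\geq\mathcal{G}(A\rangle B)$ is correct and essentially identical to the paper's: any pure competitor $\gamma_{AB}\leq V_{AB}$ gives $V_{AB}/V_A\geq\gamma_{AB}/\gamma_A$ by the variational characterization, and $g_-$ is decreasing, so $g_-(V_{AB}/V_A)\leq\frac12\log\det\gamma_A$. You have also, in effect, chosen the \emph{same} witness for the left-hand inequality as the paper: your Williamson pure core $S^TS$ coincides with the paper's $\gamma^{\#}_{AB}=V_{AB}\#(\Omega_{AB}V_{AB}^{-1}\Omega_{AB}^T)$ (this is exactly the content of the Lemma preceding the paper's proof in Appendix C).

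The genuine gap is that you never prove the decisive inequality $\det D\cdot\bigl(\det(S_A^TS_A)\bigr)^2\leq\det(S_A^TDS_A)\det(S_B^TDS_B)$; you only conjecture that it ``should follow'' from Cauchy--Binet plus Pl\"ucker-type relations for symplectic matrices. That inequality \emph{is} the theorem --- everything else in your argument is routine --- so as it stands the proposal does not establish $\frac12\mathcal{I}_2\geq\mathcal{E}_2$, and it is far from clear that the combinatorial route you sketch can be closed (the coupling between the $A/B$ row partition and the symplectic constraint is exactly what makes a direct minor expansion intractable). The paper sidesteps this entirely by exploiting the geometric-mean structure of the witness: Ando's inequality $\Phi(M\#N)\leq\Phi(M)\#\Phi(N)$ applied to the compression $\Phi(\cdot)=\Pi_A(\cdot)\Pi_A^T$ gives
\begin{equation*}
\gamma^{\#}_A\ \leq\ V_A\#\bigl(\Omega_A(V_{AB}/V_B)^{-1}\Omega_A^T\bigr)\,,
\end{equation*}
where the $A$-block of $\Omega_{AB}V_{AB}^{-1}\Omega_{AB}^T$ is identified via the block-inverse formula \eqref{inv}; then $\det(M\#N)=\sqrt{\det M\det N}$ and monotonicity of the determinant yield $\det\gamma^{\#}_A\leq\sqrt{\det V_A\det V_B/\det V_{AB}}$ in two lines. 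If you want to complete your argument, you should either prove your determinant inequality directly or import this geometric-mean step; without one of these the central claim is unsupported.
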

\begin{proof}
The rightmost inequality is a corollary of Theorem~\ref{G prop}. The leftmost inequality admits a neat proof that makes use of the geometric mean $M\#N\equiv M^{1/2} \big( M^{-1/2} N M^{-1/2} \big)^{1/2} M^{1/2}$ between positive matrices $M,N$ \cite{Ando}. The key step is that, for any quantum CM $V_{AB}$ obeying (\ref{bonafide}), the matrix $\gamma^{\#}_{AB} = V_{AB}\#(\Omega_{AB} V_{AB}^{-1}\Omega_{AB}^T)$ is the quantum CM of a pure Gaussian state obeying $\gamma^{\#}_{AB} \leq V_{AB}$; using it as an ansatz in (\ref{E2}) and exploiting Theorem 3 in \cite{Ando} (see \cite{epaps} for full details) one shows that $\mathcal{E}_2(A:B)_V  \leq \frac{1}{2} \log\det \gamma^{\#}_A \leq \frac{1}{2} {\cal I}_2(A:B)_V$.
\end{proof}
Remarkably, this proves that the involved measures quantitatively capture the general hierarchy of correlations \cite{ABC} in arbitrary Gaussian states \cite{AdessoReview}: the Gaussian steerability is generally smaller than the entanglement degree, which accounts for a portion of quantum correlations up to half the total ones.

A crucial consequence of Theorem~\ref{I2E} is that the R\'enyi-2 measure of entanglement can now be proven monogamous for {\it arbitrary} Gaussian states with any number of modes per party.

\begin{cor}[Monogamy of Gaussian R\'enyi-2 entanglement] \label{E mono}
The Gaussian R\'enyi-2  entanglement measure (\ref{E2}) is monogamous for any multipartite Gaussian state, i.e.
\begin{equation}
\mathcal{E}_2(A: B_1\ldots B_k) \geq {\sum}_{j=1}^k \mathcal{E}_2(A:B_j) \label{mon E2}\,.
\end{equation}
\end{cor}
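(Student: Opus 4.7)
The plan is to reduce the multipartite monogamy to a CKW-like inequality for pure Gaussian tripartite CMs, and to derive the latter by pairing the upper bound $\mathcal{E}_2 \leq \tfrac12 \mathcal{I}_2$ of Theorem~\ref{I2E} with the determinantal symmetry of marginals of pure Gaussian states. First I would invoke the standard telescoping induction: grouping $B_2\cdots B_k$ as a single party $C$, it suffices to prove $\mathcal{E}_2(A:BC)_V \geq \mathcal{E}_2(A:B)_V + \mathcal{E}_2(A:C)_V$ for every tripartite quantum CM $V_{ABC}$.

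Next I would fix an optimal (or $\varepsilon$-optimal) pure symplectic $\gamma^*_{ABC} \leq V_{ABC}$ achieving the Gaussian convex roof in \eqref{E2}, so that $\mathcal{E}_2(A:BC)_V = \tfrac12 \log\det \gamma^*_A$. Passing to principal submatrices preserves the operator ordering, giving $\gamma^*_{AB} \leq V_{AB}$ and $\gamma^*_{AC} \leq V_{AC}$; and since the infimum in \eqref{E2} is monotonically decreasing in its CM argument (a larger CM enlarges the feasible set of pure states lying below it, and hence lowers the infimum), this yields $\mathcal{E}_2(A:B)_V \leq \mathcal{E}_2(A:B)_{\gamma^*_{AB}}$ and the analogous inequality for $C$.

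The core of the proof is then to bound these marginal entanglements using Theorem~\ref{I2E}, which for an arbitrary Gaussian state produces $\mathcal{E}_2(A:B)_{\gamma^*_{AB}} \leq \tfrac12 \mathcal{I}_2(A:B)_{\gamma^*_{AB}} = \tfrac14 \log \frac{\det\gamma^*_A \det\gamma^*_B}{\det\gamma^*_{AB}}$, and similarly for the $AC$ marginal. Here I would exploit that $\gamma^*_{ABC}$ is symplectic: complementary marginals of a pure Gaussian state share the same non-unit symplectic eigenvalues, so $\det\gamma^*_{AB} = \det\gamma^*_C$ and $\det\gamma^*_{AC} = \det\gamma^*_B$. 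Adding the two bounds, the factors $\det\gamma^*_B$ and $\det\gamma^*_C$ cancel exactly, leaving $\tfrac12\log\det\gamma^*_A = \mathcal{E}_2(A:BC)_V$; combining with the monotonicity step of the previous paragraph closes the monogamy inequality.

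The hard part will be justifying the Schmidt-type determinantal identity $\det\gamma^*_{AB} = \det\gamma^*_C$ for pure multimode Gaussian states. This requires Williamson's normal form and a careful accounting of the ``vacuum'' (unit) symplectic eigenvalues, which drop out of the determinant and are precisely what allows the clean cancellation that matches the factor $\tfrac12$ in $\mathcal{E}_2 \leq \tfrac12 \mathcal{I}_2$. A minor technical point is the attainment of the infimum in \eqref{E2}, which one either treats by an $\varepsilon$-optimizer argument or by a standard compactness consideration on the set of symplectic CMs below $V_{ABC}$. With the Schmidt identity in hand, the rest is just Theorem~\ref{I2E} applied at the marginal level combined with the decreasing monotonicity of the convex roof under the CM order.
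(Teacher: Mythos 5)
Your proposal is correct and follows essentially the same route as the paper: reduce to the tripartite case, take the optimal pure $\gamma_{ABC}\leq V_{ABC}$, use the pure-state identity $\tfrac12\mathcal{I}_2(A:BC)_\gamma=\tfrac12\mathcal{I}_2(A:B)_\gamma+\tfrac12\mathcal{I}_2(A:C)_\gamma$ (your determinantal cancellation), apply $\mathcal{E}_2\leq\tfrac12\mathcal{I}_2$ from Theorem~\ref{I2E} to each marginal, and finish with the monotonicity of $\mathcal{E}_2$ under the CM ordering. The step you flag as ``hard,'' namely $\det\gamma_{AB}=\det\gamma_C$ for symplectic $\gamma_{ABC}$, is immediate from the determinant factorization $\det\gamma_{ABC}=\det\gamma_{AB}\det(\gamma_{ABC}/\gamma_{AB})$ together with $\gamma_{ABC}/\gamma_{AB}=\Omega_C^T\gamma_C^{-1}\Omega_C$, both already established in the paper.
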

\begin{proof}  It suffices again to prove that $\mathcal{E}_2(A: BC)_V \geq \mathcal{E}_2(A: B)_V + \mathcal{E}_2(A: C)_V$ holds for any tripartite quantum CM $V_{ABC}$. Take the pure state with symplectic CM $\gamma_{ABC}\leq V_{ABC}$ that saturates the infimum in the definition of $\mathcal{E}_2(A: BC)$ and notice that
$\mathcal{E}_2(A: BC) = \frac{1}{2} \log\det \gamma_A = \frac{1}{2} {\cal I}_2(A:BC)_\gamma = \frac{1}{2} {\cal I}_2(A:B)_\gamma  + \frac{1}{2} {\cal I}_2(A:C)_\gamma$, where the last equality holds specifically for pure states. Applying \eqref{I>2E} to each of the two rightmost addends yields $\mathcal{E}_2(A: BC)_V \geq \, \mathcal{E}_2(A: B)_\gamma\, +$ $\mathcal{E}_2(A: C)_\gamma \geq \mathcal{E}_2(A: B)_V + \mathcal{E}_2(A: C)_V$, where the last step follows as ${\cal E}_2$ is a decreasing function of the CM.
\end{proof}

Corollary~\ref{E mono} yields the {\it most general} result to date regarding quantitative monogamy of continuous variable entanglement \cite{ourreview,AdessoReview}, as all previous proofs (for the R\'enyi-2 measure \cite{AdessoSerafini} or other quantifiers \cite{hiroshima_2007,strongmono}) were restricted to the special case of {\it one} mode per party. Combining (\ref{I>2E}) with (\ref{mon E2}), one also proves (\ref{mon steer 2}) for all pure Gaussian states, i.e., for all symplectic quantum CMs $V_{A B_1 \ldots B_k}$, as claimed in Theorem~\ref{G mono}(b).

\paragraph*{Conclusions.} We have derived fundamental inequalities for the Schur complement of positive semidefinite matrices and explored their far-reaching applications to quantum information theory. This enabled us to recover seemingly unrelated findings from  recent literature, like the strong subadditivity for log-determinant of CMs \cite{AdessoSerafini,Adesso} and  the basic properties of relevant measures of continuous variable entanglement \cite{AdessoSerafini} and steering \cite{steerability,MonSteer}, and to reach substantially beyond.  In particular, we proved that the Gaussian steerability \cite{steerability,JOSAB} for Gaussian states is a convex monotone under Gaussian local operations and classical communication, i.e., it is a fully fledged steering measure \cite{resource} within the Gaussianity restriction; we further proved it is monogamous with respect to the steering party for any (even non-Gaussian) multimode state, but not with respect to the steered party if the latter has more than one mode and the overall state is mixed. We also proved that the Gaussian R\'enyi-2 measure of entanglement \cite{AdessoSerafini} is monogamous for any Gaussian state with an arbitrary number of modes per party. This key result is a simple corollary of a general hierarchical relation here established for measures of correlations based on log-determinant of CMs.

This work further reveals how pursuing {\it prima facie} technical advances in classical information theory and linear algebra can significantly impact on the identification of possibilities and limitations for quantum technologies, which had eluded a general quantitative analysis so far.
It will be worth investigating adaptations of our results to the study of quantum correlations in discrete variable stabilizer states, useful resources for quantum computing \cite{rauss} which share deep mathematical analogies with continuous variable Gaussian states \cite{Gross06,Gross}.

\paragraph*{Acknowledgments.}
We warmly thank R.~Simon for many fruitful discussion on the topic of this work. We acknowledge financial support from the European Union under the European Research Council (StG GQCOP No.~637352 and AdG IRQUAT No.~267386) and the the European Commission (STREP RAQUEL No.~FP7-ICT-2013-C-323970),
the Foundational Questions Institute (fqxi.org) Physics of the Observer Programme (Grant No.~FQXi-RFP-1601),
the Spanish MINECO (Project No.~FIS2013-40627-P and FPI Grant No.~BES-2014-068888), and the Generalitat de Catalunya (CIRIT Project No.~2014~SGR~966).

\bibliographystyle{apsrevfixed}
\bibliography{schurbib}


\clearpage
\setcounter{equation}{0}
\begin{widetext}
\appendix

\section*{Appendix: Technical proofs and additional remarks}

\section{Properties of the functions $\boldsymbol{g_\pm}$} \label{app prop g}

Throughout this section, we study the functions $g_\pm$ introduced via \eqref{func g} and demonstrate their properties as stated in the main text. A decisive ingredient of our analysis is a version of the Courant-Fischer-Weyl variational principle for symplectic eigenvalues proven in the recent paper \cite{bhatia15}.
First of all, observe that $g_\pm(A)=g_\pm(SAS^T)$ for all symplectic $S$, since those functions are defined only in terms of symplectic eigenvalues. Moreover, it is immediately verified that
\begin{equation} g_\pm (A^{-1}) = g_\mp (A) \, , \qquad g_+(A)-g_-(A)=\frac{1}{2}\,\log\det A\, ,\qquad g_\pm (A\oplus B)=g_\pm (A) + g_\pm (B) \, . \label{elem prop g} \end{equation}
Perhaps less trivially, the following holds.

\begin{prop} \label{g- convex}
The function $g_-(A)$ is monotonically decreasing and convex in $A$, while $g_+$ is monotonically increasing but neither convex nor concave.
\end{prop}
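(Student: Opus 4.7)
My plan rests on two facts about symplectic eigenvalues imported from the Courant--Fischer--Weyl-type variational principle recently established in~\cite{bhatia15}: (i)~\emph{monotonicity}, i.e.\ $0<A\leq B$ implies $\nu_i(A)\leq\nu_i(B)$ for every $i$; and (ii)~\emph{concavity of bottom partial sums}, i.e.\ each $S_k(A):=\sum_{i=1}^k\nu_i(A)$ (with $\nu_i$'s in increasing order) is a concave function of $A>0$, as it admits a representation as the minimum over "symplectically normalised" $2n\times 2k$ frames of a linear-in-$A$ quadratic trace functional, hence a minimum of affine functions of $A$.

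Monotonicity of $g_\pm$ then follows immediately from (i): since $x\mapsto\max\{\log x,0\}$ is nondecreasing and $x\mapsto\max\{-\log x,0\}$ is nonincreasing on $(0,\infty)$, summing over $i$ gives that $g_+$ is monotonically increasing and $g_-$ monotonically decreasing in $A$, settling the first half of the statement.

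For the convexity of $g_-$, set $h(x):=\max\{-\log x,0\}$, which is convex and nonincreasing on $(0,\infty)$. Given $A,B>0$, $t\in[0,1]$ and $C:=tA+(1-t)B$, define $y_i:=\nu_i(C)$ and $x_i:=t\nu_i(A)+(1-t)\nu_i(B)$, both sorted in increasing order. Fact~(ii) yields $\sum_{i=1}^k y_i \geq \sum_{i=1}^k x_i$ for every $k$, i.e.\ the vector $(x_i)$ is weakly majorised from below by $(y_i)$. A classical Hardy--Littlewood--P\'olya rearrangement lemma---which one verifies by decomposing any convex nonincreasing $h$ as a nonnegative superposition of the building blocks $x\mapsto(c-x)_+$ (plus an affine part) and checking the inequality directly for each block---then delivers $\sum_i h(y_i)\leq\sum_i h(x_i)$. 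Applying the scalar convexity of $h$ coordinatewise gives $\sum_i h(x_i)\leq t\,g_-(A)+(1-t)\,g_-(B)$, hence $g_-(C)\leq t\,g_-(A)+(1-t)\,g_-(B)$, as desired.

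For the claim that $g_+$ is neither convex nor concave, I would supply explicit single-mode ($n=1$) counterexamples based on $g_+(\alpha\mathds{1}_2)=\max\{\log\alpha,0\}$. Taking $A_1=2\mathds{1}_2, A_2=3\mathds{1}_2$ yields $g_+\bigl(\tfrac12(A_1+A_2)\bigr)=\log(5/2)>\tfrac12\log 6$, since $5/2>\sqrt 6$, violating convexity; taking $A_1=\tfrac12\mathds{1}_2, A_2=2\mathds{1}_2$ yields $g_+\bigl(\tfrac12(A_1+A_2)\bigr)=\log(5/4)<\tfrac12\log 2$, since $5/4<\sqrt 2$, violating concavity. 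The principal obstacle in the whole argument is fact~(ii): while monotonicity of the individual $\nu_i$ is essentially classical, concavity of the partial sums $S_k$ relies on the full strength of the recent variational principle of~\cite{bhatia15}; once that is in hand, the subsequent descent to convexity of $g_-$ via weak majorization is routine, and the $g_+$ statement is only matter of exhibiting scalar counterexamples.
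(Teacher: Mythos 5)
Your proof is correct, and for the convexity of $g_-$ it takes a genuinely different route from the paper's. The monotonicity argument and the $g_+$ counterexamples coincide with the paper's (which also invokes symplectic Weyl monotonicity and tests $g_+$ on multiples of the identity; your explicit numbers $5/2>\sqrt 6$ and $5/4<\sqrt 2$ check out). For convexity, however, the paper uses the \emph{multiplicative} variational principle of \cite{bhatia15} (Theorem 5 there): $\prod_{i=1}^k\nu_i(A)=\min_{S:\,S^T\Omega_{2n}S=\Omega_{2k}}\sqrt{\det(S^TAS)}$, which gives directly $g_-(A)=-\tfrac12\min_{k,S}\log\det(S^TAS)$, a pointwise minimum of concave functions of $A$ by concavity of $\log\det$ --- so convexity of $g_-$ drops out in two lines with no majorization needed. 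You instead rely on the \emph{additive} principle (concavity of the partial sums $S_k(A)=\sum_{i=1}^k\nu_i(A)$ as minima of trace functionals over symplectic frames), deduce weak supermajorization of the symplectic spectrum of the convex combination, and finish with the Tomi\'c/Hardy--Littlewood--P\'olya lemma for nonincreasing convex $h$. This is sound, and it even yields the stronger statement that $A\mapsto\sum_i h(\nu_i(A))$ is convex for \emph{every} nonincreasing convex $h$, not just $h(x)=\max\{-\log x,0\}$. The one caveat is your attribution of fact~(ii): the 2015 paper \cite{bhatia15} cited here states the determinant/product extremal characterization, while the trace/sum characterization of $\sum_{i=1}^k\nu_i$ that you need appeared in later work of the same authors (it can be derived from the interlacing results of \cite{bhatia15}, but it is not the theorem the present paper uses). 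You should either cite the correct source for fact~(ii) or include its short derivation; with that repaired, your argument is complete.
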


\begin{proof}
The fact that $g_+,g_-$ are monotone in their inputs can be seen as an easy consequence of the symplectic equivalent of Weyl's monotonicity theorem first proven as Lemma 2 in \cite{sympsplemma} (and reported as Theorem 8.15 in \cite{Gosson}). That result states that if $A\geq B>0$ are $2n\times 2n$ real matrices, then their ordered symplectic eigenvalues satisfy $\nu_i (A)\geq \nu_i(B)$ for all $i=1,\ldots, n$. The claim follows by performing elementary manipulations.

Now, let us prove the convexity of $g_-(A)$. Our proof employs the recently found variational expression
\begin{equation}
\prod_{i=1}^k\, \nu_i(A)\ = \min_{S:\, S^T\Omega_{2n} S=\Omega_{2k}} \sqrt{\det(S^T A S)} \label{var expr}
\end{equation}
for the product of the $k$ smallest symplectic eigenvalues (see Theorem 5 in \cite{bhatia15}). In the above expression, $\Omega_{2k}$ denotes the standard symplectic form on $k$ modes. We easily find
\begin{equation}
g_-(A)\ =\  \max_{1\leq k\leq n}\, \sum_{i=1}^k (-\log \nu_i(A))\ =\ -\, \frac{1}{2}\, \min_{\scriptsize \begin{array}{cc} 1\leq k\leq n\, , \\ S:\, S^T\Omega_{2n} S=\Omega_{2k} \end{array} } \log\det(S^T A S)
\label{var expr g}
\end{equation}
Since $\log \det$ is well-known to be concave \cite{logconcave}, and $F(x)\equiv\min_{y\in Y} f(x,y)$ is always concave in $x$ if $f(x,y)$ was concave in $x$ for all fixed $y\in Y$, we infer that $g_-$ is indeed convex. Finally, in order to see that $g_+$ is neither convex nor concave it suffices to test it on positive multiples of the identity.
\end{proof}

\begin{rem} \emph{Why Proposition \ref{g- convex} does not imply that the logarithmic negativity is convex}. The formula $E_N = g_-(\tilde{V}_{AB})$, the linearity of the partial transposition $V_{AB}\mapsto \tilde{V}_{AB}$ on CMs and the convexity of $g_-$ could lead us to think that the logarithmic negativity is convex in the input state, which is false \cite{VidalWerner,plenioprl}. The reason why this chain of implications is not correct is that $E_N$ is expressible in terms of the CM only for Gaussian states, that do not constitute a convex set. However, it is true that if $\{\rho_i\}_i$ is a family of Gaussian states such that their convex combination $\sum_i p_i \rho_i$ is again Gaussian, then
\begin{equation} E_N \left(\sum_i p_i \rho_i\right)\, \leq\ \sum_i p_i E_N(\rho_i)\, . \end{equation}
One could call this behaviour \emph{Gaussian--convexity}. The logarithmic negativity is an example of a Gaussian--convex function which is in general non-convex.
\end{rem}

Proposition \ref{g- convex} can be used to prove that $g_-(V_{AB})$ decreases if the coherences between subsystems $A$ and $B$ are erased.

\begin{prop}[Decoherence reduces $g_-$] \label{dec red g-} $ \\ $
Let $V_{AB}>0$ be a bipartite positive definite matrix (not necessarily a quantum CM). Then
\begin{equation}
g_-(V_{AB})\ \geq\ g_-(V_A)\, +\, g_-(V_B)\, . \label{dec red g- eq S}
\end{equation}
\end{prop}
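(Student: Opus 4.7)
The plan is to leverage the variational formula for $g_-$ established in the proof of Proposition~\ref{g- convex}, namely
\begin{equation*}
g_-(A)\ =\ -\,\frac{1}{2}\, \min_{\scriptsize \begin{array}{c} 1\leq k\leq n \\ S:\, S^T\Omega_{2n} S=\Omega_{2k} \end{array}} \log\det(S^T A S)\, ,
\end{equation*}
together with Fischer's determinantal inequality for positive definite matrices in block form. The idea is that any near-optimal $S_A$ for $V_A$ and $S_B$ for $V_B$ can be combined into an admissible competitor $S_A \oplus S_B$ for $V_{AB}$, and Fischer's inequality will then turn the block structure of $(S_A \oplus S_B)^T V_{AB} (S_A \oplus S_B)$ into the desired additive bound.

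First, I would pick minimizers $S_A$ (of size $2n_A \times 2k_A$ with $S_A^T \Omega_{2n_A} S_A = \Omega_{2k_A}$) and $S_B$ (of size $2n_B \times 2k_B$ with $S_B^T \Omega_{2n_B} S_B = \Omega_{2k_B}$) achieving the variational expressions for $g_-(V_A)$ and $g_-(V_B)$, respectively. Using the mode-ordered convention $\Omega_{2(n_A+n_B)} = \Omega_{2n_A} \oplus \Omega_{2n_B}$ adopted throughout the paper, a direct computation shows that $S \equiv S_A \oplus S_B$ satisfies $S^T \Omega_{2(n_A+n_B)} S = \Omega_{2(k_A+k_B)}$, so $S$ is an admissible competitor in the variational formula for $g_-(V_{AB})$.

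Second, I would expand
\begin{equation*}
S^T V_{AB}\, S\ =\ \begin{pmatrix} S_A^T V_A S_A & S_A^T V_{AB}^{\text{off}} S_B \\ S_B^T (V_{AB}^{\text{off}})^T S_A & S_B^T V_B S_B \end{pmatrix},
\end{equation*}
where $V_{AB}^{\text{off}}$ denotes the off-diagonal block of $V_{AB}$. Since $V_{AB}>0$ and $S$ has full column rank (being symplectic onto its image), this matrix is positive definite. Fischer's inequality therefore yields
\begin{equation*}
\det(S^T V_{AB}\, S)\ \leq\ \det(S_A^T V_A S_A)\, \det(S_B^T V_B S_B)\, .
\end{equation*}
Taking $-\frac{1}{2}\log$ on both sides and using the variational bound for $g_-(V_{AB})$ on the left together with the optimality of $S_A, S_B$ on the right gives exactly \eqref{dec red g- eq S}.

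The only delicate point is the direct-sum compatibility of the symplectic constraint on $S_A$ and $S_B$, which is a purely bookkeeping matter once one fixes the mode-ordered convention for $\Omega$. The inequality is actually somewhat remarkable because $V_{AB}$ is not assumed to be a quantum CM, so no uncertainty-relation input is used; the whole argument is a clean blend of the variational characterization and Fischer's inequality.
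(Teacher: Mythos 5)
Your proof is correct, but it takes a genuinely different route from the one the paper writes out. The paper derives \eqref{dec red g- eq S} from the convexity of $g_-$ (Proposition~\ref{g- convex}): since $\Theta=\mathds{1}_A\oplus(-\mathds{1}_B)$ is symplectic, $g_-(V_{AB})=\tfrac12\big(g_-(V_{AB})+g_-(\Theta V_{AB}\Theta)\big)\geq g_-\big(\tfrac12 V_{AB}+\tfrac12\Theta V_{AB}\Theta\big)=g_-(V_A\oplus V_B)$, the averaging erasing the off-diagonal block. The paper does remark that ``an alternative argument can be deduced directly from the variational expression,'' and your argument is exactly that alternative, fleshed out: the direct sum $S=S_A\oplus S_B$ of optimal competitors is indeed admissible (its columns satisfy $S^T(\Omega_{2n_A}\oplus\Omega_{2n_B})S=\Omega_{2(k_A+k_B)}$ and $S$ is injective because $S^T\Omega S$ is invertible, so $S^TV_{AB}S>0$ and Fischer's inequality applies), and the chain $g_-(V_{AB})\geq-\tfrac12\log\det(S^TV_{AB}S)\geq-\tfrac12\log\det(S_A^TV_AS_A)-\tfrac12\log\det(S_B^TV_BS_B)=g_-(V_A)+g_-(V_B)$ is sound. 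As for what each buys: the paper's route is shorter on the page because it recycles convexity (which was itself proved from the same variational formula), while yours is more self-contained, needs only the elementary Fischer inequality on top of the variational characterization, and gives the $k$-partite superadditivity in a single step rather than by iteration. One pedantic caveat you inherit from the paper's own formula: the range $1\leq k\leq n$ in the variational expression should be read as including $k=0$ (empty $S$, with $\log\det$ of the empty matrix equal to $0$), otherwise it fails when all symplectic eigenvalues exceed $1$ and $g_-=0$; with that convention your choices $S=0\oplus S_B$ or $S_A\oplus 0$ cover the degenerate cases and nothing in your argument breaks.
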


\begin{proof}
We will give a straightforward proof based on the convexity of $g_-$, but an alternative argument can be deduced directly from the variational expression \eqref{var expr g}. Observing that $\mathds{1}_A\oplus (-\mathds{1}_B)$ is a symplectic operation one finds
\begin{equation}
g_-(V_{AB})\ =\ g_-\Big( \big(\mathds{1}_A\oplus (-\mathds{1}_B) \big) \, V_{AB}\, \big(\mathds{1}_A\oplus (-\mathds{1}_B) \big) \Big)\, ,
\end{equation}
from which we infer
\begin{align}
g_-(V_{AB})\ &=\ \frac{1}{2} \left( g_-(V_{AB})\ +\ g_-\Big( \big(\mathds{1}_A\oplus (-\mathds{1}_B) \big) \, V_{AB}\, \big(\mathds{1}_A\oplus (-\mathds{1}_B) \big) \Big) \right)\ \geq\ g_-\left(\, \frac{1}{2}\, V_{AB}\, +\, \frac{1}{2}\, \big(\mathds{1}_A\oplus (-\mathds{1}_B) \big) \, V_{AB}\, \big(\mathds{1}_A\oplus (-\mathds{1}_B) \big)\, \right)\  \nonumber \\
&=\ g_-(V_A\oplus V_B)\ =\ g_-(V_A)\, +\, g_-(V_B)\, .
\end{align}
\end{proof}

\begin{rem}
One could be tempted to conjecture inequalities linking $g_-(V_{AB})$ with $g_-(V_A)$ and $g_-(V_{AB}/V_A)$. However, in general on the one hand $g_-(V_{AB})\ngeq g_-(V_A) + g_-(V_{AB}/V_A)$ (counterexample: bipartite quantum system $AB$  which is $A\rightarrow B$ steerable) and on the other hand $g_-(V_{AB})\nleq g_-(V_A) + g_-(V_{AB}/V_A)$ (there exist numerical counterexamples to that).
\end{rem}

We have seen that the function $g_-$ admits a variational expression \eqref{var expr g} in terms of a maximum (or the negative of a minimum). Now, we explore an alternative variational principle for $g_-$ (and $g_+$) that yields it directly as a minimum instead.

\begin{lemma} \label{second var g+-}
The functions $g_\pm$ admit the following representations:
\begin{align*}
g_+(W)\, &=\, \min_{W\leq\, Z\,\geq i\Omega} \ \frac{1}{2} \log\det Z\, ,\\
g_-(W)\, &=\, \min_{\Omega^T W^{-1}\Omega \leq \, Z\, \geq i\Omega} \ \frac{1}{2} \log\det Z\, .
\end{align*}
\end{lemma}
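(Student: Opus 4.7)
}
The strategy will be to establish the first identity directly by exhibiting an optimizer, and then to reduce the second identity to the first via the substitution $W \mapsto \Omega^T W^{-1}\Omega$. This reduction works because $\Omega$ is itself symplectic (indeed $\Omega\,\Omega\,\Omega^T = -\Omega^T = \Omega$), so congruence by $\Omega$ preserves symplectic eigenvalues and hence $\Omega^T W^{-1}\Omega$ has symplectic spectrum $\{1/\nu_i(W)\}$; by \eqref{elem prop g} this gives $g_+(\Omega^T W^{-1}\Omega) = g_-(W)$, after which the first identity applied to $\Omega^T W^{-1}\Omega$ yields the second identity verbatim.

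For the first identity, the natural ansatz comes from Williamson's theorem \cite{williamson}: write $W = S^T D_W S$ with $S$ symplectic and $D_W = \bigoplus_{i=1}^n \nu_i(W)\, I_2$, and set $Z^* := S^T D_{Z^*} S$ with $D_{Z^*} := \bigoplus_{i=1}^n \max\{\nu_i(W),1\}\, I_2$. The three defining properties of $Z^*$ follow by inspection: (i) $D_{Z^*}\ge D_W$ gives $Z^*\ge W$ by congruence; (ii) the symplectic eigenvalues of $Z^*$ are exactly $\max\{\nu_i(W),1\}\ge 1$, which is equivalent to $Z^*\ge i\Omega$; and (iii) because $\det S = 1$, we have $\tfrac12\log\det Z^* = \sum_i \log\max\{\nu_i(W),1\} = \sum_i \max\{\log\nu_i(W),0\} = g_+(W)$. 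Hence $Z^*$ achieves the value $g_+(W)$ on the right-hand side.

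It remains to show the matching lower bound: any feasible $Z$ satisfies $\tfrac12\log\det Z \ge g_+(W)$. The key tool is the symplectic analogue of Weyl's monotonicity principle (Lemma 2 of \cite{sympsplemma}, recalled in the proof of Proposition~\ref{g- convex}), which guarantees that $Z\ge W$ implies $\nu_i(Z)\ge\nu_i(W)$ for each $i$; combining this with $\nu_i(Z)\ge 1$ (from $Z\ge i\Omega$) gives $\nu_i(Z)\ge \max\{\nu_i(W),1\}$, so summing logarithms yields the claim. I expect this monotonicity step to be the main technical pressure point, since it is what forces the two constraints $Z\ge W$ and $Z\ge i\Omega$ to translate into an entrywise bound on the symplectic spectrum; everything else is bookkeeping built on Williamson normal form, the symplectic invariance of $g_\pm$, and the identity $\det S = 1$ for symplectic $S$ \cite{Arvind95}.
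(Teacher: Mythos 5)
Your proof is correct and follows essentially the same route as the paper's: the Williamson-form ansatz $\bar\nu_i=\max\{\nu_i,1\}$ for the achievability direction, the two constraints $Z\geq W$ and $Z\geq i\Omega$ forcing $\nu_i(Z)\geq\max\{\nu_i(W),1\}$ for the lower bound, and the reduction $g_-(W)=g_+(\Omega^T W^{-1}\Omega)$ for the second identity. The only cosmetic difference is that you invoke the symplectic Weyl monotonicity theorem directly for the lower bound, whereas the paper invokes the monotonicity of $g_+$ (itself established from that same theorem in Proposition~\ref{g- convex}), so the underlying mechanism is identical.
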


\begin{proof}
Since $g_+$ is increasing, clearly $W\leq Z$ implies $g_+(W)\leq g_+(Z)=\frac{1}{2}\log\det Z$, where the last equality holds because $Z$ is a quantum CM. This shows that $g_+(W) \leq \min_{W\leq\, Z\,\geq i\Omega} \ \frac{1}{2} \log\det Z$. On the other hand, Williamson's form $W=S\left(\begin{smallmatrix} \nu & 0 \\ 0 & \nu \end{smallmatrix}\right)S^T$ allows us to construct the ansatz $\bar{Z}\equiv S\left(\begin{smallmatrix} \bar{\nu} & 0 \\ 0 & \bar{\nu} \end{smallmatrix}\right) S^T$, where $\bar{\nu}_i \equiv \max\{ \nu_i, 1 \}$, which satisfies
\begin{equation}
W\leq \bar{Z}\geq i\Omega\, ,\qquad \frac{1}{2}\log\det \bar{Z} = \sum_i \max\{ 0,\, \log\nu_i(W)\} = g_+(W)\, .
\end{equation}
The expression for $g_-$ can be deduced from the one for $g_+$ with the help of the formula $g_-(W)=g_+(W^{-1})=g_+(\Omega^T W^{-1} \Omega)$.
\end{proof}

\begin{rem}
We remind the reader that for any $W>0$ the condition $Z \geq \Omega^T W^{-1} \Omega$ is equivalent to
\begin{equation}
\begin{pmatrix} W & \Omega \\ \Omega^T & Z \end{pmatrix} \geq 0\, .
\end{equation}
\end{rem}

\section{Properties of the steerability measure (Theorems \ref{G prop} and \ref{G mono})} \label{app prop steer}

We are now able to prove the physically fundamental properties of the Gaussian steerability measure \eqref{G steer} as stated in Theorem \ref{G prop}. In \cite{steerability}, (some of) these facts were stated and proven only in the particular case in which the steered system is made of one mode.

\begin{proof}[Proof of Theorem \ref{G prop}] $ \\[-3ex] $
\begin{itemize}
\item[(1)] {\it $\mathcal{G}(A\rangle B)_V$ is convex and decreasing as a function of the CM $V_{AB}>0$.} \\[0.5ex]
Both properties follow straightforwardly by combining concavity and monotonicity of the Schur complement with Proposition \ref{g- convex}. Let us prove convexity for instance. Since the Schur complement is concave, for any $V_{AB},W_{AB}>0$ and $0\leq p\leq 1$ we obtain
\begin{equation*}
(pV_{AB}+(1-p)W_{AB}) \big/ (pV_A+(1-p) W_A)\ \geq\ p\, V_{AB}/V_A + (1-p)\, W_{AB}/W_A\, .
\end{equation*}
Applying the fact that $g_-$ is decreasing and convex gives
\begin{align}
\mathcal{G}(A\rangle B)_{pV_{AB}+(1-p)W_{AB}}\ &=\ g_- \big( (pV_{AB}+(1-p)W_{AB}) \big/ (pV_A+(1-p) W_A) \big)\ \leq\ g_- \big( p V_{AB}/V_A + (1-p) W_{AB}/W_A\big)\ \\
&\leq\ p\, g_-(V_{AB}/V_A)\, +\, (1-p)\, g_- (W_{AB}/W_A)\ =\ p\, \mathcal{G}(A\rangle B)_V\, +\, (1-p)\, \mathcal{G}(A\rangle B)_W \, . \nonumber
\end{align}

\item[(2)] {\it ${\cal G}(A\rangle B)$ is additive under tensor products, i.e. ${\cal G}(A_1 A_2\rangle B_1 B_2)_{V_{A_1B_1}\oplus W_{A_2 B_2}} =  \mathcal{G}(A_1\rangle B_1)_{V_{A_1B_1}} +  \mathcal{G}(A_2\rangle B_2)_{W_{A_2 B_2}}$.} \\[0.5ex]
Elementary, since
\begin{align}
\mathcal{G}(A_1A_2\rangle B_1B_2)_{V_{A_1B_1}\oplus W_{A_2 B_2}}\ &=\ g_-\left( (V_{A_1B_1}\oplus W_{A_2 B_2}) \big/ (V_{A_1}\oplus W_{A_2}) \right)\, =\, g_-\left( V_{A_1B_1} / V_{A_1} \oplus W_{A_2 B_2}/ W_{A_2} \right)\, \\
&=\, g_-\left( V_{A_1B_1} / V_{A_1}\right)\, +\, g_-\left(W_{A_2 B_2} / W_{A_2} \right)\ =\, \mathcal{G}(A_1\rangle B_1)_{V_{A_1B_1}}\, +\, \mathcal{G}(A_2\rangle B_2)_{W_{A_2 B_2}}\, . \nonumber
\end{align}

\item[(3)] {\it For arbitrary states, $\mathcal{G}(A\rangle B)$ is monotonically decreasing under general, non-deterministic Gaussian maps on the steering party $A$.} \\[0.5ex]
Using the monotonicity of the Schur complement under general Gaussian maps, as given in Theorem \ref{CP incr sch} of the main text, one gets
\begin{equation*}
\Gamma_{A\rightarrow A'} (V_{AB}) \big/ \Gamma_{A\rightarrow A'}(V_A)\ \geq\ V_{AB} / V_A\, .
\end{equation*}
Applying $g_-$ to both sides yields exactly
\begin{equation}
\mathcal{G}(A'\rangle B)_{\Gamma_{A\rightarrow A'}(V_{AB})}\ \leq\ \mathcal{G}(A\rangle B)_{V_{AB}} \, .
\end{equation}

\item[(4)] {\it For Gaussian states, $\mathcal{G}(A\rangle B)$ is monotonically decreasing under general, non-deterministic Gaussian maps on the steered party $B$.} \\[0.5ex]
This is the most difficult claim to prove. First of all, we recall that any general, non-deterministic Gaussian map can always be obtained by: i) adding an uncorrelated ancillary system; ii) performing a global symplectic operation; and iii) measuring some of the modes by means of a Gaussian measurement \cite{nogo1,nogo2,nogo3}. Clearly, $\mathcal{G}(A\rangle B)$ is invariant under the addition of an ancillary steered system in an uncorrelated state because of the above point (2). Furthermore, the invariance under symplectic operations on $B$ is guaranteed by the very definition of $g_-$ in terms of symplectic eigenvalues. Thus, we are only left to prove that the Gaussian steerability decreases when a partial Gaussian measurement is performed on the steered system.

We remind the reader that a Gaussian measurement is comprised of a set of positive Gaussian operators obtained by applying displacement unitaries to a single positive Gaussian operator with quantum CM $\gamma$. It is known that, given a composite system $ABC$ in a Gaussian state with quantum CM $V_{ABC}$, when one measures the subsystem $C$ according to a Gaussian measurement with quantum CM $\gamma_C$, the reduced post-measurement state of subsystem $AB$ is Gaussian and with a quantum CM given by $\tilde{V}_{AB}=(V_{ABC}+\gamma_C)/(V_C+\gamma_C)$ (independently of the outcome). Bearing that in mind, we are claiming that for all quantum CMs $V_{ABC}$ one has
\begin{equation}
g_-\left( \tilde{V}_{AB} /\tilde{V}_A \right) \leq\, g_-(V_{ABC}/V_A)\, . \label{steerability decreases measurement}
\end{equation}
Call $W_{BC}\equiv V_{ABC}/V_A$. Then, a simple calculation that uses the quotient property of the Schur complement shows that
\begin{align*}
\tilde{V}_{AB} /\tilde{V}_A &= \big( (V_{ABC}+\gamma_C) / (V_C+\gamma_C) \big)\, \big/\, \big( (V_{AC}+\gamma_C) / (V_C+\gamma_C) \big) = (V_{ABC}+\gamma_C) / (V_{AC}+\gamma_C)\, \\
&=\, \big( (V_{ABC}+\gamma_C)/V_A \big)\, \big/\, \big( (V_{AC}+\gamma_C)/V_A \big) \, =\, \big( V_{ABC}/V_A + \gamma_C \big)\, \big/\, \big( V_{AC}/V_A+\gamma_C \big) \, =\, (W_{BC}+\gamma_C) / (W_C+\gamma_C)\, .
\end{align*}
Thus, \eqref{steerability decreases measurement} takes the form
\begin{equation}
g_-\left((W_{BC}+\gamma_C)/(W_C+\gamma_C)\right) \leq\, g_-\left( W_{BC} \right)\, , \label{steerability decreases measurement 2}
\end{equation}
to be proven for all $W_{BC}> 0$. Now, since the measured matrix $(W_{BC}+\gamma_C)/(W_C+\gamma_C)$ is concave in $\gamma_C$, and $g_-$ is decreasing and convex by Proposition \ref{g- convex}, we can restrict ourselves to prove inequality \eqref{steerability decreases measurement 2} only in the case in which $\gamma_C$ is symplectic, i.e., it is the CM of a {\it pure} Gaussian state.

Now we apply the above Lemma \ref{second var g+-} (together with the remark immediately below it). Suppose we found a matrix $Z_{BC}\geq i\Omega_{BC}$ such that
\begin{equation}
\begin{pmatrix} W_{BC} & \Omega_{BC} \\ \Omega_{BC}^T & Z_{BC} \end{pmatrix} \geq 0\, ,\qquad \frac{1}{2}\log\det Z_{BC} = g_-(W_{BC})\, . \label{global matrix 1}
\end{equation}
Then, consider the matrix
\begin{equation}
\begin{pmatrix} 0_B\oplus \gamma_C & 0_B \oplus \Omega_C^T \\[1ex] 0_B\oplus \Omega_C & 0_B \oplus \gamma_C \end{pmatrix} \geq 0\, , \label{global matrix 2}
\end{equation}
where the last inequality holds because $\gamma\geq \Omega^T\gamma^{-1}\Omega$ for all $\gamma\geq i\Omega$, as Williamson's decomposition immediately reveals. Adding \eqref{global matrix 2} to \eqref{global matrix 1} we get
\begin{equation}
\begin{pmatrix} W_{BC}+\gamma_C & \Omega_B\oplus 0_C \\[1ex] \Omega_B^T\oplus 0_C & Z_{BC}+\gamma_C \end{pmatrix} \geq 0\, . \label{global matrix 3}
\end{equation}
Taking the Schur complement with respect to the two $C$ components, thanks to the two crucial zero blocks we have just formed, we obtain
\begin{equation}
\begin{pmatrix} (W_{BC}+\gamma_C) / (W_C+\gamma_C) & \Omega_B \\[1ex] \Omega_B^T & (Z_{BC}+\gamma_C) / (Z_C+\gamma_C) \end{pmatrix} \geq 0\, .
\label{global matrix 4}
\end{equation}
Remarkably, since $Z_{BC}\geq i\Omega_{BC}$ one finds easily $(Z_{BC}+\gamma_C) / (Z_C+\gamma_C)\geq i\Omega_B$. Therefore, the same Lemma \ref{second var g+-} gives us
\begin{equation}
g_-\left( (W_{BC}+\gamma_C) / (W_C+\gamma_C) \right)\, \leq\, \frac{1}{2} \log \det (Z_{BC}+\gamma_C) / (Z_C+\gamma_C)\, .
\end{equation}
The proof is ended once we show that
\begin{equation}
\det (Z_{BC}+\gamma_C) / (Z_C+\gamma_C) \leq \det Z_{BC}\qquad \forall\ Z_{BC}\geq i\Omega_{BC},\ \forall\ \text{symplectic quantum CMs $\gamma_C$.}
\label{determinant decreases pure POVM}
\end{equation}
This rather surprising fact is hard to prove at the level of CMs, but it becomes more tractable once we come back to the Hilbert space picture behind. This can be done thanks to the identity $\Tr\rho_G^2=1\big/\!\sqrt{\det V}$, relating the {\it purity} $\Tr\rho_G^2$ of a Gaussian state $\rho_G$ to the determinant of its CM $V$. Such an identity allows us to restate \eqref{determinant decreases pure POVM} as the claim that {\it purity of Gaussian states increases when pure Gaussian measurements are applied.}

Let us now translate also the measurement into the Hilbert space picture. Since $\gamma_C$ is pure, the Gaussian measurement will be represented by a collection of rank-one (unnormalized) operators $\{ \ket{\psi_x}\!\!\bra{\psi_x}_C\}$ such that $\int d^{2n_C} x \ket{\psi_x}\!\!\bra{\psi_x}_C = \mathds{1}_C$. Furthermore, we have seen that the outcomes of this measurement on a Gaussian state $\rho_{BC}$ with covariance matrix $Z_{BC}$ will be states $\tilde{\rho}_B^{(x)}=U_x^\dag \tilde{\rho}_B U_x$, where $U_x$ are displacement unitaries depending on $x$ and $\tilde{\rho}_B$ is a Gaussian state independent of $x$ with CM $(Z_{BC}+\gamma_C)/(Z_C+\gamma_C)$. With these hypotheses, we now see that Lemma \ref{rk-1 POVM lemma} below allows us to conclude that $\Tr\tilde{\rho}_B^2 \geq \Tr \rho_{BC}^2$, which immediately yields \eqref{determinant decreases pure POVM} since both $\tilde{\rho}_B$ and $\rho_{BC}$ are Gaussian states.

\item[(5)] {\it The upper bound ${\cal G}(A\rangle C)_V  \leq  g_+(V_{BC}/V_B)$ holds for any quantum CM $V_{ABC}$ obeying \eqref{bonafide}.} \\[0.5ex]
Taking \eqref{INEQ 2}, applying $g_-$ and using the elementary properties \eqref{elem prop g} yields exactly
\begin{equation}
\mathcal{G}(A\rangle C)\ =\ g_-(V_{AC}/V_A)\ \leq\ g_+(V_{BC}/V_B)\, .
\end{equation}
\end{itemize}
\end{proof}

In proving point (4) of Theorem \ref{G prop} above, we used some unproven property of Gaussian pure measurements. We now clarify this point by stating two lemmas which complete the proof.

\begin{lemma} \label{x x_dag lemma}
Let
\begin{equation}
\begin{pmatrix} A & X \\ X^\dag & B \end{pmatrix} \geq 0
\end{equation}
be a hermitian, positive definite block matrix. Then
\begin{equation}
\|X\|_2^2 \leq \|A\|_2 \|B\|_2\, ,
\label{x x_dag eq}
\end{equation}
where $\|M\|_2=\sqrt{\Tr M^\dag M}$ denotes the Hilbert--Schmidt norm.
\end{lemma}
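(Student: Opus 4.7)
The strategy I would pursue is to reduce the problem to a clean algebraic form via a well-known factorization, then apply a standard trace inequality. The starting point is the fact that $\begin{pmatrix} A & X \\ X^\dag & B \end{pmatrix} \geq 0$ with positive definite diagonal blocks is equivalent to the existence of a contraction $K$ (that is, $K K^\dag \leq \mathds{1}$, or $\|K\|_\infty \leq 1$) such that $X = A^{1/2} K B^{1/2}$. This is a direct consequence of the Schur complement characterisation of positivity and is standard; I would simply quote it from Bhatia \cite{Bhatia}.

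With this factorization in hand, cyclicity of the trace yields
\begin{equation*}
\|X\|_2^2 \,=\, \Tr(X^\dag X) \,=\, \Tr\bigl(B^{1/2} K^\dag A K B^{1/2}\bigr) \,=\, \Tr\bigl(B \, K^\dag A K\bigr).
\end{equation*}
Since both $B$ and $K^\dag A K$ are Hermitian, the Cauchy--Schwarz inequality for the Hilbert--Schmidt inner product gives $\Tr(B\, K^\dag A K) \leq \|B\|_2 \, \|K^\dag A K\|_2$.

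The last step is to show $\|K^\dag A K\|_2 \leq \|A\|_2$, which is where the contractivity of $K$ enters. Setting $L := A^{1/2} K$ one has $K^\dag A K = L^\dag L$ and hence $\|K^\dag A K\|_2^2 = \Tr\bigl((L^\dag L)^2\bigr) = \Tr\bigl((L L^\dag)^2\bigr)$. But $LL^\dag = A^{1/2} (K K^\dag) A^{1/2} \leq A$ thanks to $K K^\dag \leq \mathds{1}$, and on the positive semidefinite cone the map $M \mapsto \Tr(M^2)$ is monotone under the operator order (since $0 \leq M \leq N$ implies $\Tr M^2 \leq \Tr(MN) \leq \Tr N^2$, where each step uses that $M^{1/2}(\cdot) M^{1/2}$ preserves the order). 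Hence $\|K^\dag A K\|_2 \leq \|A\|_2$, and combining the inequalities proves \eqref{x x_dag eq}.

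There is no real obstacle beyond invoking the factorization lemma. A more compact alternative would be to apply H\"older's inequality for Schatten norms directly to $X = A^{1/2} K B^{1/2}$ in the form $\|MNP\|_2 \leq \|M\|_4 \|N\|_\infty \|P\|_4$: since $\|A^{1/2}\|_4 = \|A\|_2^{1/2}$, $\|B^{1/2}\|_4 = \|B\|_2^{1/2}$, and $\|K\|_\infty \leq 1$, this collapses the argument to a single line yielding $\|X\|_2 \leq \|A\|_2^{1/2} \|B\|_2^{1/2}$. Either route works; the Cauchy--Schwarz version is perhaps more self-contained.
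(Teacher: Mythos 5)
Your proof is correct and is essentially the paper's own argument with the roles of $A$ and $B$ interchanged: the contraction factorization $X=A^{1/2}KB^{1/2}$ is just a repackaging of the Schur-complement positivity condition $X^\dag A^{-1}X\leq B$ that the paper uses directly, and both arguments then combine the Cauchy--Schwarz inequality for the Hilbert--Schmidt inner product with monotonicity of the Hilbert--Schmidt norm on the positive cone. Your concluding H\"older-inequality variant is a nice one-line streamlining, but the underlying mechanism is the same.
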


\begin{proof}
Using Cauchy--Schwartz inequality for the Hilbert--Schmidt product, the inequality $X^\dag A^{-1}X\leq B$, and the fact that the Hilbert--Schmidt norm is an increasing function on positive matrices, we obtain
\begin{align}
\|X\|_2^2\, &=\, \Tr XX^\dag\, =\, \Tr A\, A^{-1/2} XX^\dag A^{-1/2}\, \leq\, \|A\|_2\, \| A^{-1/2} XX^\dag A^{-1/2}\|_2\,  \nonumber \\
&=\, \|A\|_2\, \sqrt{\Tr A^{-1/2} XX^\dag A^{-1} XX^\dag A^{-1/2}}\, =\, \|A\|_2\, \sqrt{\Tr \left(X^\dag A^{-1} X \right)^2}\, \\
&=\, \|A\|_2\, \|X^\dag A^{-1} X\|_2\, \leq\, \|A\|_2\, \|B\|_2\, . \nonumber
\end{align}
\end{proof}

\begin{lemma} \label{rk-1 POVM lemma}
Suppose that the outcomes of the partial, rank-one measurement $\{ \ket{\psi_i}\!\!\bra{\psi_i}_C\}$ on a bipartite system $BC$ in a state $\rho_{BC}$ are always unitarily equivalent to a fixed density operator on the remaining system $B$, i.e.
\begin{equation}
{}_C \langle \psi_i | \rho_{BC} | \psi_i \rangle_C\, =\, p_i\, U_i^\dag \tilde{\rho}_B U_i\qquad \forall\ i\, .
\end{equation}
Then
\begin{equation}
\text{\emph{Tr}}\, \tilde{\rho}_B^2\, \geq\, \text{\emph{Tr}}\, \rho_{BC}^2\, . \label{rk-1 POVM eq}
\end{equation}
\end{lemma}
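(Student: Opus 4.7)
My plan is to reformulate the purity $\Tr\rho_{BC}^{2}$ as a sum of squared Hilbert--Schmidt norms of operators indexed by pairs of POVM outcomes, and then to control the diagonal blocks via the unitary-equivalence hypothesis while estimating the off-diagonal ones by Lemma~\ref{x x_dag lemma}. The key device I would introduce is the isometry $T:C\to\ell^{2}(I)$ defined by $T|v\rangle_{C}=\sum_{i}\langle\psi_{i}|v\rangle_{C}\,|i\rangle$, whose adjoint satisfies $T^\dagger T=\sum_{i}|\psi_{i}\rangle\langle\psi_{i}|=\mathds{1}_{C}$ by the POVM normalisation. This turns the rank-one POVM on $C$ into an orthonormal outcome register $I$ on which everything can be handled in block form.

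The first step will be to form the positive operator $\tilde{\rho}_{BI}:=(\mathds{1}_{B}\otimes T)\rho_{BC}(\mathds{1}_{B}\otimes T^\dagger)$ on $B\otimes\ell^{2}(I)$; cyclicity of the trace combined with $T^\dagger T=\mathds{1}_{C}$ immediately yields $\Tr\tilde{\rho}_{BI}^{2}=\Tr\rho_{BC}^{2}$, so it suffices to bound the purity of $\tilde{\rho}_{BI}$. Writing $\tilde{\rho}_{BI}=\sum_{ij}A_{ij}\otimes|i\rangle\langle j|$, one reads off $A_{ij}=\langle\psi_{i}|_{C}\rho_{BC}|\psi_{j}\rangle_{C}$; the diagonal blocks are precisely the post-measurement operators furnished by the hypothesis, $A_{ii}=p_{i}U_{i}^\dagger\tilde{\rho}_{B}U_{i}$, whence unitary invariance of $\|\cdot\|_{2}$ collapses them to a common factor, $\|A_{ii}\|_{2}=p_{i}\sqrt{\Tr\tilde{\rho}_{B}^{2}}$. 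Since $\tilde{\rho}_{BI}\geq 0$, each principal $2\times 2$ block $\left(\begin{smallmatrix}A_{ii}&A_{ij}\\ A_{ij}^\dagger & A_{jj}\end{smallmatrix}\right)$ is positive, so Lemma~\ref{x x_dag lemma} gives $\|A_{ij}\|_{2}^{2}\leq\|A_{ii}\|_{2}\,\|A_{jj}\|_{2}=p_{i}p_{j}\Tr\tilde{\rho}_{B}^{2}$.

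To finish I would expand $\Tr\tilde{\rho}_{BI}^{2}=\sum_{ij}\Tr(A_{ij}A_{ji})=\sum_{ij}\|A_{ij}\|_{2}^{2}$, which is legitimate because $\tilde{\rho}_{BI}$ is self-adjoint and hence $A_{ji}=A_{ij}^\dagger$; combining this with the block bound above and with the probability normalisation $\sum_{i}p_{i}=\Tr\rho_{BC}=1$ immediately delivers $\Tr\rho_{BC}^{2}\leq(\sum_{i}p_{i})^{2}\Tr\tilde{\rho}_{B}^{2}=\Tr\tilde{\rho}_{B}^{2}$. I do not expect a serious technical obstacle here: the isometric dilation is the standard way to orthogonalise a rank-one POVM, Lemma~\ref{x x_dag lemma} is essentially tailor-made for off-diagonal blocks, and the unitary-equivalence hypothesis only enters through the unitary invariance of $\|\cdot\|_{2}$. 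The mildly delicate conceptual point is recognising that the entire content of the statement is encoded in the principal $2\times 2$ submatrices of a single PSD operator on the dilated Hilbert space; once that observation is in place, the argument is purely algebraic.
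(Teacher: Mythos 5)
Your proposal is correct and follows essentially the same route as the paper: the isometry $T=\sum_i\ket{i}\!\!\bra{\psi_i}$ is exactly the paper's CPTP map $\Phi$, the dilated operator $\tilde\rho_{BI}$ is the paper's $(I_B\otimes\Phi)(\rho_{BC})$, and the off-diagonal blocks are controlled by the same Hilbert--Schmidt block inequality (Lemma~\ref{x x_dag lemma}), with the final chain $\sum_{ij}\|A_{ij}\|_2^2\leq\sum_{ij}\|A_{ii}\|_2\|A_{jj}\|_2=\big(\sum_ip_i\big)^2\Tr\tilde\rho_B^2$ being the paper's inequality read in the same direction. No gaps.
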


\begin{proof}
For the sake of brevity, in what follows we suppose that $i$ is an index running over a finite alphabet, but the argument below extends straightforwardly to the more general case in which it belongs to a measurable space.
Since the identity
\begin{equation}
\tilde{\rho}_B\, =\, U_i  \frac{{}_C \langle \psi_i | \rho_{BC} | \psi_i \rangle_C  }{p_i} U_i^\dagger\, ,
\end{equation}
is valid for all indices $i$, we obtain
\begin{equation}
\Tr \tilde{\rho}_B^2\, =\, \|\tilde{\rho}_B\|_2^2\, =\, \bigg( \sum_i p_i \,\|\tilde{\rho}_B\|_2 \bigg)^2\, =\, \bigg(\sum_i \, \big\| \langle \psi_i | \rho_{BC} | \psi_i \rangle \big\|_2\ \bigg)^2\, = \, \sum_{ij} \,\big\| \langle \psi_i | \rho_{BC} | \psi_i \rangle \big\|_2\, \big\| \langle \psi_j | \rho_{BC} | \psi_j \rangle \big\|_2\, , \label{rk-1 POVM eq1}
\end{equation}
where we omitted the subscript $C$ of $\ket{\psi_i}$ for the sake of brevity. Consider the map $\Phi_{C \rightarrow C'}$ from $C$ to a new system $C'$ such that
\begin{equation}
\Phi(X) = \bigg( \sum_i \ket{i}\!\!\bra{\psi_i} \bigg)\, X\, \bigg( \sum_j \ket{j}\!\!\bra{\psi_j} \bigg)^\dag = \sum_{ij} \ket{i}\!\!\bra{\psi_i} X \ket{\psi_j}\!\!\bra{j}\, .
\end{equation}
Obviously, $\Phi$ is completely positive and trace-preserving. Therefore, $\left(I_B\otimes \Phi_{C\rightarrow C'}\right)(\rho_{BC})\geq 0$ is a legitimate quantum state. This latter density matrix has blocks indexed by $i,j$ and given by ${}_C \langle \psi_i | \rho_{BC} | \psi_j \rangle_C$. Thanks to Lemma \ref{x x_dag lemma}, we know that for all $i\neq j$,
\begin{equation}
\big\| \langle \psi_i | \rho_{BC} | \psi_j \rangle \big\|_2^2\, \leq\, \big\| \langle \psi_i | \rho_{BC} | \psi_i \rangle \big\|_2 \, \big\| \langle \psi_j | \rho_{BC} | \psi_j \rangle \big\|_2\, .
\end{equation}
This is also trivially true when $i=j$. Therefore,
\begin{equation}
\sum_{ij} \,\big\| \langle \psi_i | \rho_{BC} | \psi_i \rangle \big\|_2\, \big\| \langle \psi_j | \rho_{BC} | \psi_j \rangle \big\|_2\, \geq\, \sum_{ij} \,\big\| \langle \psi_i | \rho_{BC} | \psi_j \rangle \big\|_2^2\, =\, \sum_{ij} \Tr \left[ \bra{\psi_i} \rho_{BC} \ket{\psi_j}\!\!\bra{\psi_j} \rho_{BC} \ket{\psi_i} \right]\, =\, \Tr \rho_{BC}^2\, , \label{rk-1 POVM eq2}
\end{equation}
where we used the normalization condition $\sum_i \ket{\psi_i}\!\!\bra{\psi_i}_C = \mathds{1}_C$. Inserting \eqref{rk-1 POVM eq2} into \eqref{rk-1 POVM eq1} yields the claim \eqref{rk-1 POVM eq}.
\end{proof}

\begin{rem}
\red{According to the resource theory of steering \cite{resource}, any valid quantifier of   ``$A$ to $B$'' steerability should be mandatorily (i) vanishing on unsteerable assemblages and (ii) nonincreasing on average under one-way LOCC (from $B$ to $A$), and optionally (iii) convex. A quantity satisfying (i), (ii), and (iii) is referred to as a {\it convex steering monotone}; examples of such monotones are discussed in \cite{resource}. In point (ii), one-way LOCC are defined as (ii.a) arbitrary
deterministic classical operations on the steering party $A$ and (ii.b) arbitrary non-deterministic quantum operations on the steered
party $B$ \cite{resource}.    Theorem \ref{G prop} proves  that, for an arbitrary bipartite Gaussian state with quantum CM $V_{AB}$, the measure ${\cal G}(A\rangle B)$ is convex and monotonic under arbitrary non-deterministic Gaussian quantum operations on either the steered or the steering party, which is even stronger than what required by the specialization of (ii) to Gaussian states and operations. Therefore  Theorem \ref{G prop} proves that  ${\cal G}(A\rangle B)$ is a valid convex steering monotone within the Gaussian subtheory of steering, settling a question left open in \cite{steerability,resource}. We further remark that parts (1)--(3) and (5) of Theorem \ref{G prop} hold for arbitrary continuous variable states, not necessarily Gaussian. On the other hand, our current proof of the monotonicity of ${\cal G}(A \rangle B)$ under general, non-deterministic Gaussian maps on $B$ relies on the specification to Gaussian states of $AB$. We leave it as an open problem whether a more general proof of part (4) could be obtained, valid even for non-Gaussian states.
}
\end{rem}

\begin{rem}
As a corollary of Theorem \ref{G prop}, one sees easily that the R\'{e}nyi--2 measure of entanglement
\begin{equation}
\mathcal{E}_2(A:B)\, =\, \inf_{\gamma_{AB}\, \text{pure}:\ \gamma_{AB}\,\leq\, V_{AB}}\, \frac{1}{2}\, \log\det \gamma_A\, ,
\end{equation}
is an upper bound on the steerabilities $\mathcal{G}(A\rangle B)$ and $\mathcal{G}(B\rangle A)$. In fact, consider the optimal pure $\gamma_{AB}\leq V_{AB}$ in the above equation and write
\begin{equation}
\mathcal{E}_2(A:B)_V\, =\, \frac{1}{2}\log\det \gamma_A\, =\, g_-(\gamma_{AB}/\gamma_A)\, \geq\, g_-(V_{AB}/V_A)\, =\, \mathcal{G}(A\rangle B)_V\, ,
\end{equation}
where we used first the expression of the steerability in terms of local determinant for pure states and then the fact that $\mathcal{G}({A \rangle B})$ is monotonically decreasing as a function of the CM.
\end{rem}

Now, let us discuss claim (b) of Theorem \ref{G mono}. As stated in the main text, the validity of \eqref{mon steer 2} for pure Gaussian states can be easily inferred by putting together inequality \eqref{E mono} and Theorem \ref{I2E}:
\begin{equation}
\mathcal{G}(B_1\ldots B_k \rangle A)_V\, =\, \mathcal{E}_2(B_1\ldots B_k:A)_V\, \geq\, \sum_{j=1}^k \mathcal{E}_2(B_j:A)_V \, \geq\, \sum_{j=1}^k \mathcal{G}(B_j\rangle A)_V\, ,
\end{equation}
where the first equality holds specifically for pure states.

On the contrary, already in the simplest case $k=2$, $n_A=2,\, n_{B_1}=n_{B_2}=1$, there exist mixed states violating inequality~\eqref{mon steer 2}. A counterexample is as follows:
\begin{equation} V_{AB_1B_2}\ =\ \begin{pmatrix}
 1.2 & -0.3 & 0.4 & -2.7 & 1.8 & -1.9 & 0.4 & -0.1 \\
 -0.3 & 0.9 & -1.2 & 0.4 & -1.2 & 0.5 & -0.4 & 0.1 \\
 0.4 & -1.2 & 4.5 & 1.6 & -1.4 & 1.8 & -0.1 & -0.3 \\
 -2.7 & 0.4 & 1.6 & 12. & -9.5 & 10.1 & -1.4 & -0.3 \\
 1.8 & -1.2 & -1.4 & -9.5 & 11.9 & -11.5 & 1.6 & 0.8 \\
 -1.9 & 0.5 & 1.8 & 10.1 & -11.5 & 11.9 & -1. & -1.4 \\
 0.4 & -0.4 & -0.1 & -1.4 & 1.6 & -1. & 2.4 & -2. \\
 -0.1 & 0.1 & -0.3 & -0.3 & 0.8 & -1.4 & -2. & 2.8
\end{pmatrix}\, .
\end{equation}
Here, the first four rows and columns pertain to $A$, the fifth and sixth to $B_1$, the last two to $B_2$. It can be easily verified that the minimum symplectic eigenvalue of the above matrix with respect to the symplectic form $\Omega_A\oplus \Omega_{B_1}\oplus\Omega_{B_2}$ is $\nu_{\min}(V_{B_1B_2A})=1.01359$, so that $V_{B_1B_2 A}$ is a legitimate quantum CM (obeying \eqref{bonafide}). However,
\begin{equation}
\mathcal{G}(B_1B_2\rangle A)_V\, -\, \mathcal{G}(B_1\rangle A)_V\, -\, \mathcal{G}(B_2\rangle A)_V\, =\, -0.816863\, .
\end{equation}

\section{R\'enyi-2 correlation hierarchy (Theorem \ref{I2E})} \label{app hierarchy}

This section is devoted to prove one of the main results of the paper, i.e. the correlation hierarchy for Gaussian states of arbitrarily many modes that is the content of Theorem \ref{I2E}. A fundamental tool we employ is the {\it geometric mean} between two positive matrices $M,N>0$ \cite{Ando}. For an excellent introduction to this topic, we refer the reader to chapter 4 of \cite{Bhatia}. We limit ourselves to discuss (without proofs) the main properties of this remarkable quantity.

The geometric mean $M\# N$  of two matrices $M,N>0$ can be defined in many equivalent ways:
\begin{itemize}
\item it is the only function of $M,N>0$ which is invariant under simultaneous congruence (that is, $(LML^T)\# (LNL^T)=L (M\# N) L^T$) and reduces to $\sqrt{MN}$ when $[M,N]=0$;
\item $M\# N\, \equiv\, M^{1/2}\left( M^{-1/2} N M^{-1/2} \right)^{1/2} M^{1/2}$;
\item $M\# N\, \equiv\, \max\, \left\{ X=X^T: \begin{pmatrix} M & X \\ X & N \end{pmatrix} \geq 0 \right\}\, =\, \max\, \{ X=X^T: M\geq XN^{-1}X\} $;
\item $M\# N$ is the unique positive definite solution of the Riccati equation $M=XN^{-1} X$ with unknown $X$.
\end{itemize}
Furthermore, this special matrix function enjoys many desirable properties:
\begin{itemize}
\item $M\# N=N\#M$;
\item $M\# N$ is jointly concave in $M,N>0$;
\item $\Phi(M\# N)\leq \Phi(M)\# \Phi(N)$ for all positive maps $\Phi$.
\end{itemize}

The following lemma is our first result.

\begin{lemma}
Let $V$ be a quantum CM obeying \eqref{bonafide}. Then $\gamma_V^\# \equiv V\#(\Omega V^{-1}\Omega^T)$ satisfies $\gamma_V^\# \leq V$ and is a quantum CM of a pure Gaussian state (i.e.~it is symplectic).
\end{lemma}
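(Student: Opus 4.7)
My plan is to diagonalise $V$ by Williamson's theorem and then exploit the congruence invariance of the geometric mean, whereupon the identity $\gamma_V^\# = SS^T$ will fall out as a one-line computation. I would start by writing $V = SDS^T$ with $S$ symplectic and $D = \bigoplus_i \nu_i \mathds{1}_2$ in the standard ordering in which $\Omega$ is block-diagonal with $2\times 2$ symplectic blocks; the bona fide condition \eqref{bonafide} is then equivalent to $\nu_i \geq 1$ for every $i$, i.e.\ $D \geq \mathds{1}$. In this ordering $D$ commutes with $\Omega$ (each $\nu_i \mathds{1}_2$ block is a scalar), which will be the key structural fact exploited below.

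Next, a short manipulation establishes the auxiliary identity $\Omega V^{-1}\Omega^T = SD^{-1}S^T$. It uses that both $S$ and $S^T$ are symplectic (so $S^{-T}\Omega = \Omega S$ and $S^{-1}\Omega^T = -\Omega S^T$), combined with $\Omega D^{-1}\Omega = -D^{-1}$ coming from $[D,\Omega]=0$ and $\Omega^2 = -\mathds{1}$. With this in hand, I would apply the congruence invariance $(LML^T)\#(LNL^T)=L(M\#N)L^T$ of the geometric mean with $L=S$ to obtain
\begin{equation*}
\gamma_V^\# \,=\, (SDS^T)\,\#\,(SD^{-1}S^T)\,=\, S\bigl(D\# D^{-1}\bigr)S^T\,=\,SS^T,
\end{equation*}
where the final equality follows because $D$ and $D^{-1}$ commute, so $D\# D^{-1} = \sqrt{DD^{-1}} = \mathds{1}$.

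Both required properties then drop out immediately. Symplecticity (i.e.\ purity) of $\gamma_V^\#$ is the computation $(SS^T)\Omega(SS^T)^T = S(S^T\Omega S)S^T = S\Omega S^T = \Omega$, using that $S^T$ is symplectic as well. The inequality $\gamma_V^\# \leq V$ is $SS^T \leq SDS^T$, which by invertible congruence by $S^{-1}$ is equivalent to $\mathds{1}\leq D$, exactly the bona fide hypothesis. The only step that requires genuine care is the preliminary identity $\Omega V^{-1}\Omega^T = SD^{-1}S^T$, where one must keep track of the antisymmetry/orthogonality of $\Omega$ and the transpose symmetry of the symplectic condition; once this is in place, everything else is a transparent application of the listed properties of $\#$ together with Williamson's decomposition. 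An alternative route would invoke the Riccati characterization $M\# N = X \iff M = XN^{-1}X$, verifying directly that $X=SS^T$ solves it, but the congruence-invariance path seems the cleanest.
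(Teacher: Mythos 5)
Your proposal is correct and follows essentially the same route as the paper: Williamson's decomposition, the identity $\Omega V^{-1}\Omega^T = S D^{-1} S^T$, and congruence covariance of the geometric mean to get $\gamma_V^\# = S(D\# D^{-1})S^T = SS^T$. The only differences are cosmetic (your mode-ordered $D=\bigoplus_i \nu_i\mathds{1}_2$ versus the paper's $\left(\begin{smallmatrix} D & 0 \\ 0 & D\end{smallmatrix}\right)$ block form) and that you spell out the final verifications the paper leaves as ``both claims easily follow.''
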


\begin{proof}
Consider Williamson's decomposition $V=S\left( \begin{smallmatrix} D & 0 \\ 0 & D \end{smallmatrix} \right) S^T$, where $D\geq \mathds{1}$ is diagonal and $S$ is symplectic \cite{Williamson36}. One finds
\begin{equation*}
\Omega V^{-1}\Omega^T\ =\ \Omega S^{-T} \begin{pmatrix} D^{-1} & 0 \\ 0 & D^{-1} \end{pmatrix} S^{-1} \Omega^T\ =\ S \Omega \begin{pmatrix} D^{-1} & 0 \\ 0 & D^{-1} \end{pmatrix} \Omega^T S^T\ =\ S \begin{pmatrix} D^{-1} & 0 \\ 0 & D^{-1} \end{pmatrix} S^T\, .
\end{equation*}
Since the geometric mean is covariant under congruence (see the above discussion or Corollary 2.1 of \cite{Ando}), one finds
\begin{equation*}
\gamma_V^\#\ =\ V\#(\Omega V^{-1}\Omega^T)\ =\ S \left(\begin{pmatrix} D & 0 \\ 0 & D \end{pmatrix} \# \begin{pmatrix} D^{-1} & 0 \\ 0 & D^{-1} \end{pmatrix}\right) S^T\ =\ SS^T\, .
\end{equation*}
Both claims easily follow.
\end{proof}

\begin{proof}[Proof of Theorem \ref{I2E}]
Thanks to the above discussion, we have only to prove that ${\cal I}_2\geq 2\mathcal{E}_2$ holds for all Gaussian states. Using the pure state of the above lemma as ansatz in the definition \eqref{E2} of the R\'enyi-2 entanglement measure, and denoting by $\Pi_A$ the projector onto the $A$ component, one sees that
\begin{equation*}
\mathcal{E}_2(A:B)_V\, \leq\, \frac{1}{2}\, \log\det (\gamma_{V_{AB}}^\#)_A\, =\, \frac{1}{2}\, \log\det \left( \Pi_A \left(V_{AB}\#(\Omega_{AB} V_{AB}^{-1} \Omega_{AB}^T)\right) \Pi_A^T\right)\, .
\end{equation*}
Employing the inequality $\Phi(M\# N)\leq \Phi(M)\# \Phi(N)$ (Theorem 3 of \cite{Ando}) with the positive map $\Phi(\cdot)=\Pi_A(\cdot)\Pi_A^T$ we find
\begin{equation*}
\Pi_A \left(V_{AB}\#(\Omega_{AB} V_{AB}^{-1} \Omega_{AB}^T)\right) \Pi_A^T\ \leq\ V_A \# \left( \Pi_A (\Omega_{AB} V_{AB}^{-1} \Omega_{AB}^T) \Pi_A^T \right)\ =\ V_A\# \left( \Omega_A (V_{AB}/V_B)^{-1} \Omega_A^T \right)\, ,
\end{equation*}
where for the last step we used the well-known formula \eqref{inv} for the inverse of a $2\times 2$ block matrix. Inserting this operator inequality into the above upper bound for $\mathcal{E}_2(A:B)_V$ we obtain
\begin{equation*}
\mathcal{E}_2(A:B)_V\, \leq\, \frac{1}{2}\, \log\det \left( V_A\# \left( \Omega_A (V_{AB}/V_B)^{-1} \Omega_A^T \right)\right)\, =\, \frac{1}{4}\,\log\frac{\det V_A}{\det V_{AB}/V_B}\, =\, \frac{1}{4}\, \log\frac{\det V_A \det V_B}{\det V_{AB}}\, =\, \frac{1}{2}\, {\cal I}_2(A:B)_V\, .
\end{equation*}
\end{proof}

\begin{rem}
Putting together Eqs.~(14) and (17) in \cite{AdessoSerafini} we deduce the weaker inequality $\mathcal{E}_2\leq {\cal I}_2$ (proven there only for two--mode Gaussian states).
\end{rem}

\begin{rem}
What makes the R\'enyi-2 entropy special in the context of the above proof? It turns out that for all R\'enyi-$\alpha$ entropies with $\alpha\geq 1$ (included the von Neumann one) we can always provide the upper bound
\begin{equation}
2\ \mathcal{E}_\alpha(A:B)_V\ \leq\ {\cal S}_\alpha \left( V_A\# \left( \Omega_A (V_{AB}/V_B)^{-1} \Omega_A^T \right)\right)\, ,
\end{equation}
where the function ${\cal S}_\alpha(V)$ gives the R\'enyi-$\alpha$ entropy of a Gaussian state with quantum CM $V$, i.e.
\begin{equation}
{\cal S}_\alpha (V)\, =\, -\frac{1}{\alpha-1}\, \sum_{i=1}^n \log \frac{2^\alpha}{\big(\nu_i(V)+1\big)^\alpha-\big(\nu_i(V)-1\big)^\alpha}
\label{S_alpha}
\end{equation}
for $\alpha>1$, and
\begin{equation}
{\cal S}_1(V)\, =\, \sum_{i=1}^n\left(\frac{\nu_i(V)+1}{2}\,\log\frac{\nu_i(V)+1}{2}\ -\ \frac{\nu_i(V)-1}{2}\,\log\frac{\nu_i(V)-1}{2}\right) \label{S_1}
\end{equation}
for the von Neumann case $\alpha=1$ (see for instance Eq.~(108) of \cite{AdessoReview}). However, the crucial inequality
\begin{equation}
{\cal S}_\alpha(M\# N)\,\leq\, \frac{1}{2} {\cal S}_\alpha(M)+\frac{1}{2} {\cal S}_\alpha (N) \label{ineq geom mean}
\end{equation}
breaks down for $\alpha<2$. In particular, it can be violated for $\alpha=1$. On the contrary, ${\cal I}_\alpha\geq 2\mathcal{E}_\alpha$ is always true as long as $\alpha\geq 2$, as the next Lemma clarifies.
\end{rem}

\begin{lemma} \label{lemma alpha}
Fix an integer $n\geq 1$. The inequality ${\cal S}_\alpha(M\# N)\leq \frac{1}{2} {\cal S}_\alpha(M)+\frac{1}{2} {\cal S}_\alpha (N)$ holds for all $2n\times 2n$ real matrices $M,N>0$ if and only if $\alpha\geq 2$.
\end{lemma}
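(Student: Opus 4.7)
\emph{Plan.} The strategy is to reduce the matrix inequality to a one-dimensional question about the per-symplectic-eigenvalue contribution, then handle the scalar case by direct calculus. Write ${\cal S}_\alpha(V) = \sum_i s_\alpha(\nu_i(V))$ with $s_\alpha$ as in \eqref{S_alpha}, and set $F_\alpha(t) := s_\alpha(e^t)$. The scalar inequality to characterize is $s_\alpha(\sqrt{xy}) \leq \tfrac{1}{2}(s_\alpha(x) + s_\alpha(y))$ for all $x, y \geq 1$, equivalently midpoint convexity of $F_\alpha$ on $[0, \infty)$, equivalently $F_\alpha'' \geq 0$ pointwise.

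\emph{Scalar analysis and necessity.} Expanding $(u+1)^\alpha - (u-1)^\alpha$ near $u = 1^+$ with $\epsilon = u - 1$ and composing with $u = e^t$ yields
\begin{equation*}
F_\alpha(t) = \tfrac{\alpha}{2(\alpha-1)}\,t + \tfrac{\alpha}{8(\alpha-1)}\,t^2 - \tfrac{1}{2^\alpha(\alpha-1)}\,t^\alpha + O(t^3, t^{\alpha+1})\, ,
\end{equation*}
consistent with $F_2(t) = t$ (the two quadratic-order contributions cancel at $\alpha = 2$). For $1 < \alpha < 2$ the $t^\alpha$ term dominates the quadratic and has second derivative $-\tfrac{\alpha}{2^\alpha}\,t^{\alpha-2} \to -\infty$ as $t \to 0^+$, so $F_\alpha$ is concave in a neighborhood of $0$; the case $\alpha = 1$ is analogous via the formula \eqref{S_1}, which shows $s_1(1+\epsilon) \sim \tfrac{\epsilon}{2}\log\tfrac{2}{\epsilon}$ and hence also produces an infinite negative second derivative at the origin. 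Plugging the commuting pair $M = \mathds{1}_{2n}$, $N = (1+\epsilon)\mathds{1}_{2n}$ (so that $M\# N = \sqrt{1+\epsilon}\,\mathds{1}_{2n}$) into the matrix inequality reduces it precisely to the scalar statement at $x = 1$, $y = 1+\epsilon$, which by the above fails for every $\alpha < 2$ and $\epsilon > 0$ small. This establishes necessity.

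\emph{Sufficiency.} For $\alpha \geq 2$, a direct computation shows that $F_\alpha$ is globally convex and nondecreasing on $[0, \infty)$ (in closed form for integer $\alpha$, where $F_\alpha(t) = \tfrac{1}{\alpha-1}\log P_\alpha(e^t) + \mathrm{const}$ for an explicit polynomial $P_\alpha$; more generally by a uniform estimate). To lift the scalar inequality to the matrix level, I would invoke a symplectic-eigenvalue log-majorization: for any $M, N > 0$ and each $k \in \{1, \ldots, n\}$,
\begin{equation*}
\sum_{i=1}^k \log\nu_i^\downarrow(M\#N)\, \leq\, \sum_{i=1}^k \tfrac{1}{2}\bigl(\log\nu_i^\downarrow(M) + \log\nu_i^\downarrow(N)\bigr)\, ,
\end{equation*}
with equality at $k = n$ (the determinant identity $\det(M\#N) = \sqrt{\det M \det N}$). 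Since $F_\alpha$ is convex and nondecreasing, the Hardy--Littlewood--P\'olya theorem applies to give $\sum_i F_\alpha(\log\nu_i^\downarrow(M\#N)) \leq \sum_i F_\alpha\!\bigl(\tfrac{1}{2}(\log\nu_i^\downarrow(M) + \log\nu_i^\downarrow(N))\bigr)$, and midpoint convexity of $F_\alpha$ then yields ${\cal S}_\alpha(M\#N) \leq \tfrac{1}{2}({\cal S}_\alpha(M) + {\cal S}_\alpha(N))$.

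\emph{Main obstacle.} The symplectic log-majorization used in the sufficiency step is the only nontrivial input; I would derive it from the Bhatia--Jain--Lim variational formula \eqref{var expr} for $\prod_{i=1}^k \nu_i$ of \cite{bhatia15}, combined with the classical Ando--Hiai log-majorization for ordinary eigenvalues of $M\#N$ applied to each admissible symplectic $S$ in that formula. Everything else---the calculus of Step 1 and the one-line counterexample of Step 2---is elementary.
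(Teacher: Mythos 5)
Your overall architecture is the same as the paper's: reduce the matrix statement to convexity of the scalar function $F_\alpha(t)=s_\alpha(e^t)$ (the paper's $f_\alpha$), get necessity from commuting scalar matrices, and lift the scalar inequality back to matrices via a symplectic log-majorization for the geometric mean plus the fact that a convex increasing function is Schur-order preserving. Your necessity argument is sound and arguably cleaner than the paper's (the local expansion $F_\alpha(t)=\frac{\alpha}{2(\alpha-1)}t+\frac{\alpha}{8(\alpha-1)}t^2-\frac{1}{2^\alpha(\alpha-1)}t^\alpha+\dots$ checks out, and $F_\alpha(t/2)-\frac12 F_\alpha(t)\approx \frac{1}{2^\alpha(\alpha-1)}\bigl(\frac12-\frac{1}{2^\alpha}\bigr)t^\alpha>0$ for $1<\alpha<2$ does violate the inequality; the paper instead computes $f_\alpha''$ exactly and takes $x\to 0^+$). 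Your majorization step is also fine: the paper simply cites Theorem 3 of \cite{bhatia15}, while you sketch a rederivation from the variational formula \eqref{var expr}; that works (congruence covariance of $\#$ gives $\det\bigl(S^T(M\#N)S\bigr)=\sqrt{\det(S^TMS)\det(S^TNS)}$, which bounds the product of the smallest symplectic eigenvalues from below, and the determinant identity converts this into the descending-order statement), though the appeal to Ando--Hiai is unnecessary.

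The genuine gap is in the sufficiency direction: you assert that $F_\alpha$ is globally convex for all real $\alpha\geq 2$ by ``a direct computation'' and, for non-integer $\alpha$, ``a uniform estimate.'' This is precisely the technical core of the lemma and cannot be waved through. For integer $\alpha\geq 2$ your parenthetical remark can be made rigorous ($(u+1)^\alpha-(u-1)^\alpha=2\sum_{k\ \mathrm{odd}}\binom{\alpha}{k}u^{\alpha-k}$ has positive coefficients, so $F_\alpha$ is $\frac{1}{\alpha-1}$ times a log-sum-exp in $t$, hence convex), but this says nothing about non-integer $\alpha$, and the lemma claims the inequality for every real $\alpha\geq 2$ (this is also what the footnote about R\'enyi-$\alpha$ quantifiers relies on). The paper has to work for this: it computes $f_\alpha''$ in closed form, reduces positivity to the inequality $\cosh^\alpha(x/2)\sinh^{2-\alpha}(x/2)-\sinh^\alpha(x/2)\cosh^{2-\alpha}(x/2)\geq \alpha-1$, and proves that by writing $\alpha=2+\delta$, showing the left-hand side is convex in $\delta$ at fixed $t=\tanh(x/2)$, and bounding it by its tangent at $\delta=0$. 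You need to supply an argument of this kind (or an equivalent one) for non-integer $\alpha$; without it the ``if'' half of the lemma is unproved.
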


\begin{proof}
We claim that inequality \eqref{ineq geom mean} is equivalent to the convexity of the function
\begin{equation}
f_\alpha (x)\, \equiv\, -\frac{1}{\alpha-1}\,\log\frac{2^\alpha}{(e^x+1)^\alpha - (e^x-1)^\alpha} \label{f_alpha}
\end{equation}
defined on $\mathds{R}_+$, where conformally to \eqref{S_1} one defines
\begin{equation}
f_1 (x)\, \equiv\, \frac{e^x+1}{2}\,\log\frac{e^x+1}{2}\ -\ \frac{e^x-1}{2}\,\log\frac{e^x-1}{2}\, . \label{f_1}
\end{equation}
In fact, on the one hand choosing $M=e^x\mathds{1},\, N=e^y \mathds{1}$ yields
\begin{align*}
{\cal S}_\alpha(M\# N) &= n\, f_\alpha\left(\frac{x+y}{2}\right)\, , \\
\frac{1}{2} {\cal S}_\alpha(M)+\frac{1}{2} {\cal S}_\alpha (N)\, &=\, \frac{n}{2} f_\alpha (x) + \frac{n}{2} f_\alpha (y)\, ,
\end{align*}
so that $f_\alpha$ is necessarily convex when \eqref{ineq geom mean} holds. On the other hand, suppose that $f_\alpha$ is convex. From Theorem 3 of \cite{bhatia15} we learn that $\log\hat{\nu}(M\# N)\prec \frac{1}{2}\log\hat{\nu}(M) +\frac{1}{2}\log\hat{\nu}(N)$, where $\hat{\nu}(M)\in\mathds{R}^{2n}_+$ is obtained by listing the symplectic eigenvalues of $M$ each repeated twice and sorting the entries of the resulting vector in descending order, the logarithm of vectors is intended entrywise, and the symbol $\prec$ denotes {\it majorization} (see Chapter II of \cite{BhatiaMatrix}). What the above relation tells us is that the symplectic spectrum of the geometric mean is in a precise sense more disordered than the geometric mean of the two spectra. It is elementary to verify that whenever $x\prec y$ and $f$ is convex, $\sum_{i=1}^n f(x_i)\leq \sum_{i=1}^n f(y_i)$ holds true (see Corollary II.3.4 of \cite{BhatiaMatrix}). Choosing as $f$ the function in \eqref{f_alpha} and observing that $f_\alpha$ is always monotonically increasing, we obtain
\begin{align}
{\cal S}_\alpha(M\# N)\, &=\, \sum_{i=1}^n f_\alpha\left( \log \nu_i(M\# N) \right)\, =\, \frac{1}{2}\sum_{i=1}^{2n} f_\alpha\left( \log \hat{\nu}_i(M\# N) \right)\, \leq \, \frac{1}{2}\sum_{i=1}^{2n} f_\alpha\left( \frac{1}{2}\log\hat{\nu}_i(M) +\frac{1}{2}\log\hat{\nu}_i(N) \right)  \\ \, &\leq \, \frac{1}{2}\sum_{i=1}^{2n} \frac{1}{2} \left( f_\alpha(\log\hat{\nu}_i(M)) + f_\alpha\left(\log\hat{\nu}_i(N)\right) \right)\, =
 \, \frac{1}{2} {\cal S}_\alpha(M)+\frac{1}{2} {\cal S}_\alpha (N)\, . \nonumber
\end{align}

Now, the main claim will follow once we show that $f_\alpha$ defined via \eqref{f_alpha} is convex if and only if $\alpha\geq 2$. We can restrict our analysis to the case $\alpha>1$ since the function in \eqref{f_1} is elementarily seen to be non-convex (actually, concave). Some tedious algebra leads us to the following expression for the second derivative of that function:
\begin{equation}
f_\alpha''(x)\ =\ \frac{\alpha}{\alpha-1}\, \frac{\cosh^\alpha(x/2) \sinh^\alpha(x/2)}{\sinh^2 x\, (\cosh^\alpha(x/2) - \sinh^\alpha(x/2))}\, \left( \cosh^\alpha(x/2) \sinh^{2-\alpha}(x/2)\, -\, \sinh^\alpha (x/2) \cosh^{2-\alpha}(x/2)\, -\, \alpha +1 \right)\, .
\label{f_alpha''}
\end{equation}
Since everything else in the above expression is positive, we have only to prove that $\cosh^\alpha (x/2) \sinh^{2-\alpha} (x/2)\, -\, \sinh^\alpha (x/2) \cosh^{2-\alpha} (x/2)\, \geq \, \alpha -1$ for all $x\geq 0$ if and only if $\alpha\geq 2$. That $\alpha\geq 2$ is necessary can be seen by taking the limit $x\rightarrow 0^+$. Conversely, if $\alpha=2+\delta$ with $\delta\geq 0$ one gets
\begin{align}
\cosh^\alpha (x/2) \sinh^{2-\alpha} (x/2)\, -\, \sinh^\alpha (x/2)\cosh^{2-\alpha} (x/2)\, &=\, \cosh^2 (x/2)\, \tanh^{-\delta} (x/2)\, -\, \sinh^2 (x/2)\, \tanh^\delta (x/2)\, \nonumber \\
&=\, \frac{1}{1-t^2}\, t^{-\delta} \,-\, \frac{t^2}{1-t^2}\, t^{\delta}\, \equiv\, \varphi_t(\delta)\, ,
\end{align}
where we defined $t\equiv \tanh(x/2)$. It is not difficult to see that the function $\varphi_t(\delta)$ is convex in $\delta$ since
\begin{equation}
\varphi''_t(\delta)\, =\, \frac{t^{-\delta } \left(1-t^{2 \delta +2}\right) \log ^2(t)}{1-t^2}\, \geq\, 0\, .
\end{equation}
From this fact we deduce that
\begin{equation}
\varphi_t(\delta)\, \geq\, \varphi_t(0)+\delta\, \varphi'_t(0)\, =\, 1 + \delta\, =\, \alpha-1\, ,
\end{equation}
as claimed.
\end{proof}

\clearpage
\end{widetext}

\end{document}